\newcommand{\avec}{{\bf{a}}}
\newcommand{\bvec}{{\bf{b}}}
\newcommand{\cvec}{{\bf{c}}}
\newcommand{\evec}{{\bf{e}}}
\newcommand{\fvec}{{\bf{f}}}
\newcommand{\qvec}{{\bf{q}}}
\newcommand{\yvec}{{\bf{y}}}
\newcommand{\uvec}{{\bf{u}}}
\newcommand{\xvec}{{\bf{x}}}
\newcommand{\vvec}{{\bf{v}}}
\newcommand{\hvec}{{\bf{h}}}
\newcommand{\onevec}{{\bf{1}}}
\newcommand{\zerovec}{{\bf{0}}}
\newcommand{\Lambdamat}{{\bf{\Lambda}}}
\newcommand{\Amat}{{\bf{A}}}
\newcommand{\Bmat}{{\bf{B}}}
\newcommand{\Dmat}{{\bf{D}}}
\newcommand{\Fmat}{{\bf{F}}}
\newcommand{\Hmat}{{\bf{H}}}
\newcommand{\Imat}{{\bf{I}}}
\newcommand{\Lmat}{{\bf{L}}}
\newcommand{\Pmat}{{\bf{P}}}
\newcommand{\Mmat}{{\bf{M}}}
\newcommand{\Qmat}{{\bf{Q}}}
\newcommand{\Smat}{{\bf{S}}}
\newcommand{\Rmat}{{\bf{R}}}
\newcommand{\Vmat}{{\bf{V}}}
\newcommand{\Sigmamat}{{\bf{\Sigma}}}
\newcommand{\define}{\stackrel{\triangle}{=}}
\newcommand{\be}{\begin{equation}}
\newcommand{\ee}{\end{equation}}
\newcommand{\beqna}{\begin{eqnarray}}
\newcommand{\eeqna}{\end{eqnarray}}
\newtheorem{theorem}{Theorem}
 \newtheorem{Claim}{Claim}
  \newtheorem{Lemma}{Lemma}
  \newtheorem{corollary}{Corollary}
\DeclareMathAlphabet{\mathbfcal}{OMS}{cmsy}{b}{n}
\DeclareMathAlphabet{\pazocal}{OMS}{zplm}{m}{n}
\newcommand{\diag}{\mathop{\mathrm{diag}}}
\newcommand\maxEdges{\frac{N(N-1)}{2}}
\newcommand\timeIndx{l}
\acrodef{admm}[ADMM]{alternating
direction method of multiplier}
\acrodef{ac}[AC]{alternating current}
\acrodef{arma}[ARMA]{auto-regressive moving average}
\acrodef{cfar}[CFAR]{constant false alarm rate}
\acrodef{crb}[CRB]{ Cram$\acute{\text{e}}$r-Rao  bound}
\acrodef{cs}[CS]{compressed sensing}
\acrodef{dau}[DAU]{deep algorithm unrolling}
\acrodef{dc}[DC]{direct current}
\acrodef{dnn}[DNN]{deep neural network}
\acrodef{dsp}[DSP]{digital signal processing}
\acrodef{dl}[DL]{Deep Learning}
\acrodef{ems}[EMS]{energy management system}
\acrodef{ekf}[EKF]{extended \ac{kf}}
\acrodef{evd}[EVD]{eigenvalue decomposition}
\acrodef{eier}[EIER]{edge
identification error rate}
\acrodef{gans}[GANs]{generative
adversarial networks}
\acrodef{gso}[GSO]{Graph shift operator}
\acrodef{gsp}[GSP]{graph signal processing}
\acrodef{gnn}[GNNs]{graph neural networks}
\acrodef{glrt}[GLRT]{generalized likelihood ratio test}
\acrodef{gso}[GSO]{graph shift operator}
\acrodef{gft}[GFT]{graph Fourier transform}
\acrodef{gmrf}[GMRF]{Gaussian Markov random field}
\acrodef{gcs}[GCS]{graph convolutional skip}
\acrodef{gcn}[GCNs]{graph convolutional networks}
\acrodef{gwkf}[GW-KF]{graph weights Kalman filter}
\acrodef{gtkf}[GT-KF]{graph topology Kalman filter}
\acrodef{iid}[i.i.d.]{independent and identically distributed}
\acrodef{ista}[ISTA]{iterative soft thresholding algorithm}
\acrodef{kg}[KG]{Kalman gain}
\acrodef{kkt}[KKT]{Karush-Kuhn-Tucker}
\acrodef{kf}[KF]{Kalman filter}
\acrodef{lrt}[LRT]{likelihood ratio test}
\acrodef{lista}[LISTA]{learned \ac{ista}}
\acrodef{lmmse}[LMMSE]{linear minimum mean-squared error}
\acrodef{map}[MAP]{{\em{maximum a-posteriori}} probability}
\acrodef{ml}[ML]{maximum likelihood}
\acrodef{mse}[MSE]{mean-squared-error}
\acrodef{mmse}[MMSE]{minimum MSE}
\acrodef{nns}[NNs]{neural networks}
\acrodef{pdf}[PDF]{probability density function}
\acrodef{pf}[PF]{power flow}
\acrodef{pgd}[PGD]{proximal gradient descent}
\acrodef{pfa}[PFA]{probability of false alarm}
\acrodef{psd}[PSD]{positive semi-definite}
\acrodef{psse}[PSSE]{power systems state estimation}
\acrodef{pmu}[PMUs]{phasor measurement units}
\acrodef{roc}[ROC]{Receiver Operating Characteristic}
\acrodef{snr}[SNR]{signal-to-noise ratio}
\acrodef{svd}[SVD]{singular value decomposition}
\acrodef{svar}[SVAR]{structural vector autoregressive
model}
\acrodef{ss}[SS]{state space}
\acrodef{ssm}[SSM]{state-space model}
\acrodef{tv}[TV]{total variation}
\acrodef{wls}[WLS]{weighted least squares}
\acrodef{wrt}[w.r.t.]{with respect to}
\newcommand{\myVec}[1]{{\boldsymbol{#1}}}
\newcommand{\Input}{\myVec{x}}
\newcommand{\Kgain}{\mathbfcal{K}}
\title{Sparsity-Aware Extended Kalman Filter for Tracking \\  Dynamic Graphs}
\author{Lital Dabush, \IEEEmembership{Student Member, IEEE}, 
Nir Shlezinger, \IEEEmembership{Senior Member, IEEE},  and
Tirza Routtenberg, \IEEEmembership{Senior Member, IEEE}
\thanks{{\footnotesize{The authors are with the School of Electrical and Computer Engineering, Ben-Gurion University of the Negev, Beer-Sheva 84105, Israel, e-mail: litaldab@post.bgu.ac.il, \{nirshl; tirzar\}@bgu.ac.il.}}}
\thanks{Preliminary parts of this work were presented at the IEEE International Conference on Acoustics, Speech, and Signal Processing (ICASSP) 2023 as the paper~\cite{dabush2024Routtenberg_Kalman}. This research was supported by the ISRAEL SCIENCE FOUNDATION (Grant No. 1148/22) and by the Israel Ministry of National Infrastructure and Energy.  L. Dabush is a fellow of the AdR Women Doctoral Program.
 }
 \vspace{-0.5cm}
}
\begin{document}
    \maketitle    
\begin{abstract}
A broad range of applications involve signals with irregular structures that can be represented as a graph. As the underlying structures can change over time, the tracking dynamic graph topologies from observed signals is a fundamental challenge in graph signal processing (GSP), with applications in various domains, such as power systems, the brain-machine interface, and communication systems.
In this paper, we propose a method for tracking dynamic changes in graph topologies.  Our approach builds on a representation of the dynamics as a graph-based nonlinear state-space model (SSM), where the observations are graph signals generated through graph filtering,
and the underlying evolving topology serves as the latent states. 
In our formulation, the graph Laplacian matrix is parameterized using the incidence matrix and edge weights, enabling a structured representation of the state. 
In order to track the evolving topology in the resulting SSM, we develop a sparsity-aware extended Kalman filter (EKF) that integrates $\ell_1$-regularized updates within the filtering process.  Furthermore, a dynamic programming scheme to efficiently compute the Jacobian of the graph filter is introduced.
Our numerical study demonstrates the ability of the proposed method to accurately track sparse and time-varying graphs under realistic conditions,  with highly nonlinear measurements, various noise levels, and different change rates, while maintaining low computational complexity.
\end{abstract}

\section{Introduction}

Modern complex systems in various fields, ranging from engineering and physics to biology and sociology, involve data supported on irregular structures that are naturally modeled as graph signals \cite{newman2018networks}. 
The study of the signals, indexed by the vertices of a graph, has led to the development of \ac{gsp}, which generalizes classical \ac{dsp} theory to data defined on graphs \cite{Shuman_Ortega_2013,Sandryhaila2014,8347162}. While \ac{gsp} studies usually focus on settings where the graphs are fixed, in many practical scenarios the topology of the graph can evolve over time. This feature is critical for applications such as power grid monitoring, gas and water distribution, and transportation networks,  where efficient tracking of dynamic changes is vital for ensuring effective system operation \cite{shaked2021identification,gas_dis_system}.

Temporal variations associated with graphs and graph signals motivate their representation as a time series~\cite{chouzenoux2023sparse}. A fundamental approach used for probabilistic and interpretable time-series estimation is the \ac{kf} and its extended version \cite{gannot2008kalman}. The \ac{ekf}, which is based on a statistical representation of the dynamics as a nonlinear \ac{ssm} with Gaussian noise signals~\cite{durbin2012time}, relies on local linearization of the \ac{ssm}. \ac{ssm}-based tracking methods, such as the \ac{ekf}, have demonstrated remarkably good performance and computational efficiency in tracking and control tasks, and have been adopted for tasks involving the {\em tracking of graph signals}.  For example, the \ac{kf} was used for distributed state estimation over sensor networks \cite{Ling_2009}, 
tracking network traffic \cite{Soule2005}, graph-time filters \cite{isufi20162}, tracking bandlimited graph signals \cite{Isufi2020}, and tracking random graph processes from a nonlinear dynamic using \ac{gsp} tools \cite{Sagi2023Routtenberg,buchnik2023gspkalmannet}. The works in \cite{Ling_2009,Soule2005,isufi20162,Isufi2020,Sagi2023Routtenberg,buchnik2023gspkalmannet} focus on graphs with known topologies, while \ac{ssm}-based tracking has also been proposed for settings where the topologies are unknown. 
In this context, Bayesian approaches for estimating the state transition matrix and the noise covariance have been proposed \cite{Cox2023Elvira, chouzenoux2023sparse}. An expectation-maximization algorithm for estimating both the transition matrix and the states has been proposed in \cite{elvira2022graphical}. In \cite{Cox2025Polynomial}, a particle filter that approximates the state dynamics and the observation model using polynomials of the state has been proposed. However, in all these works, the graph topology is assumed to be stationary, and tracking is performed on the {\em graph signals}. 

Despite the importance of time-varying graphs, most existing tools for learning graph topologies assume stationary signals. These include works that learn the graph Laplacian via inverse covariance estimation with sparsity regularization \cite{lake2010discovering,medvedovsky2023efficient,7979524,NEURIPS2019_90cc440b} or by imposing smoothness of graph signals \ac{wrt} the topology \cite{Dong_Vandergheynst_2016,Vassilis_2016}. However, such approaches assume a fixed, time-invariant topology. Most existing works that account for the dynamics of the graph consider linear (first-order)  topologies, including techniques based on a time-varying Graphical Lasso formulation~ \cite{hallac2017network}, and on \ac{svar} models \cite{Ioannidis2019Giannakis,Shen2018Giannakis,Money22022}. Specifically, \cite{Ioannidis2019Giannakis} jointly estimates the states and a linear topology via alternating \ac{kf} and minimization, \cite{Shen2018Giannakis}, and \cite{Money22022} extends this to nonlinear settings using kernel-based methods, where the latter also performs joint signal estimation. 
However, these methods treat all inferred connections, whether in the precision matrix or in the \ac{svar} model, as first-order neighbors, which limits their ability to accurately infer the topology in the presence of higher-order graph filtering processes.

Several works have investigated dynamic nonlinear topologies~\cite{cappelletti2024graphdictionarysignalmodelsparse, alippi2023graph,  Kalofolias2017, 10333427}, rather than the more commonly studied static topologies.
In particular,   \cite{cappelletti2024graphdictionarysignalmodelsparse} proposed to learn a dictionary of graphs that differ in edge weights, where the evolving graph topology is represented as a weighted sum of the graph atoms. 
The method in \cite{Kalofolias2017} learns independent graphs over sliding windows, incorporating temporal regularization to enforce smooth transitions in edge weights between adjacent windows. 
A method to detect topology changes between consecutive time windows has been developed in \cite{10333427}.  However, both approaches assume that changes occur slowly relative to the window length and process batches of measurements with equal weighting, limiting their ability to capture rapid dynamics or respond to recent changes. Deep-learning techniques have recently been used for tracking dynamic topologies \cite{alippi2023graph}. However, these techniques heavily depend on the availability of sufficient high-quality event labels, which are often unavailable in practice \cite{Do_Coutto_Filho2007Schilling}.

The need for reliable methods for tracking nonlinear topologies, combined with the established suitability of Kalman-type methods for tracking in nonlinear dynamics, motivates their application for dynamic graphs.
However, applying the \ac{ekf} (or other methods for nonlinear \acp{ssm} that rely on the unscented~\cite{julier2004unscented} or the cubature~\cite{arasaratnam2009cubature} transforms)  to systems with dynamic graph topologies presents significant challenges. These include the sparsity inherent in many graph structures, the computational complexity, 
and the need for a sufficient number of measurements to achieve observability. 

In this paper, we propose a graph-based \ac{ekf} for tracking dynamic graph topologies. Our algorithm is based on a new formulation of the graph transitions as a \ac{gsp} filtering model, where the measurements are graph signals and the states represent the evolving topology. 
Specifically, our approach models the evolution of graph topology through a time-varying sparse edge-weight vector, as illustrated in Fig.~\ref{fig:dynamic_graph}.
Based on this model, we introduce our adaptation of the \ac{ekf},  which incorporates sparsity-driven modeling based on the $\ell_1$ norm, along with a complexity-reduced Jacobian computation 
method. Finally, we evaluate the performance of different Kalman-filter-based methods in tracking graph topologies.

\begin{figure}
    \centering
    \includegraphics[width=0.9\columnwidth]{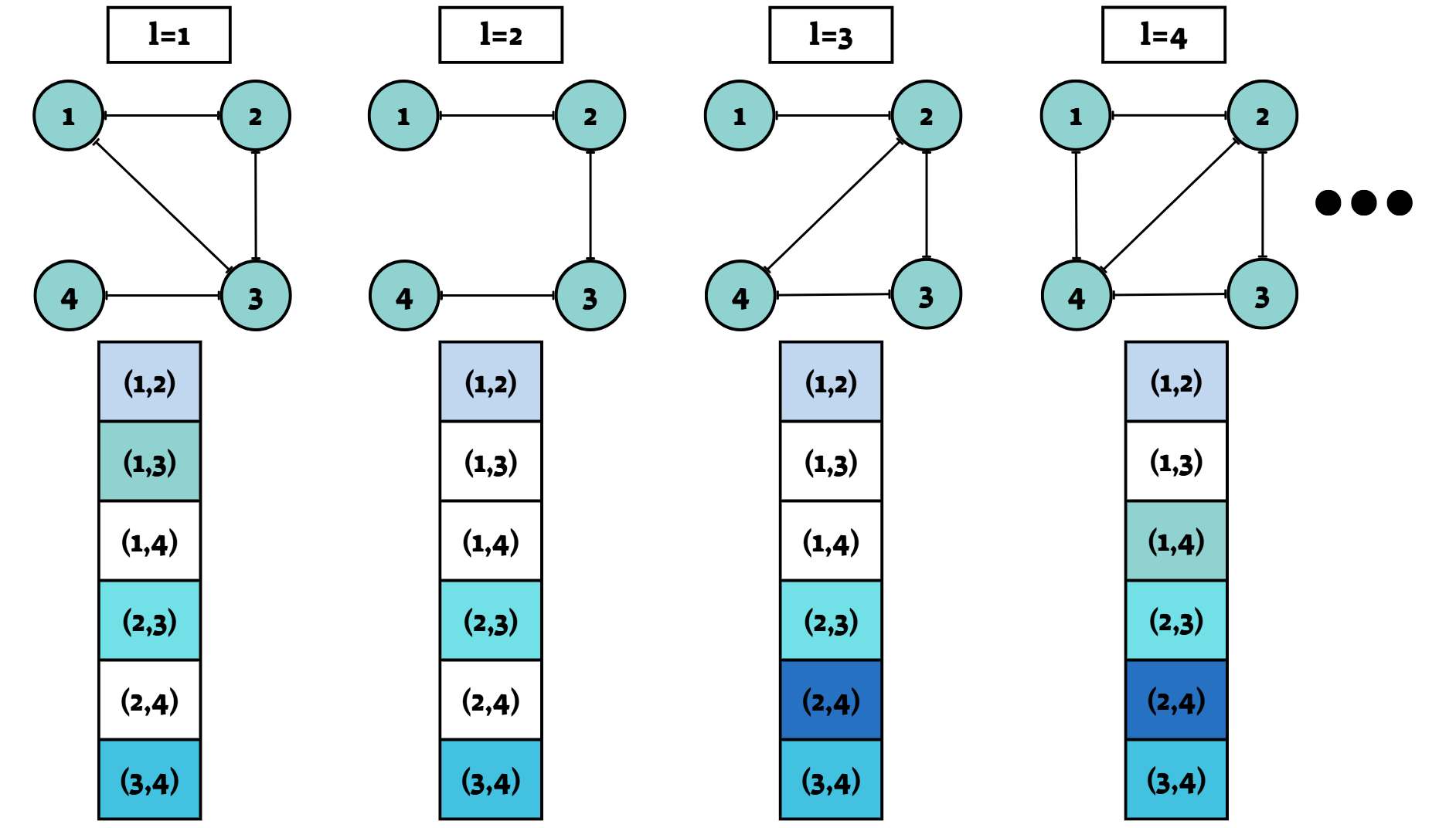}
    \vspace{-0.1cm}
    \caption{Illustration of spatiotemporal data across a simple dynamic four-node graph. 
     The top figure represents the topology as a graph; the bottom figure depicts the corresponding sparse vector representation, in which blue cells indicate edge weights, and white denotes the absence of connections (zero-weight).}\label{fig:dynamic_graph}
     \vspace{-0.25cm}
\end{figure}

Our main contributions are summarized as follows:
\begin{itemize}[leftmargin=0.25pt]
\item {\bf Dynamic Topologies as Sparse \acp{ssm}}: We present a novel sparse \ac{ssm} with an unknown support, which enables simultaneous tracking of both edge set changes and unknown edge weights. The representation accommodates a nonlinear dependence on the underlying topology. 

\item {\bf \ac{ekf}-Based Graph Tracking}: We develop a tracking algorithm for dynamic topologies based on the proposed \ac{ssm}, which combines an \ac{ekf} with sparsity regularization. To alleviate the computational burden associated with local linearization of high-dimensional nonlinear graph filters, we develop a closed-form expression for the associated Jacobian matrices of the \ac{ekf} update stage, and propose an efficient computation method based on dynamic programming.

\item {\bf Theoretical Analysis}: We present a theoretical analysis of two special cases: $(i)$ a linear measurement model, which enables an observability analysis related to the graph change rate; and $(ii)$ a known edge set (i.e., known support of the sparse vector) with unknown edge weights, which provides insights into the impact of prior structural knowledge on performance.

\item {\bf Extensive Experimentation}: We present comprehensive empirical evaluations under multiple observation models (both linear and nonlinear), analyzing the effects of graph sparsity levels, noise intensities, degrees of nonlinearity in the topology dependence, and rates of graph variation—numerically validating the insights derived from the observability analysis.
\end{itemize}

These contributions advance state-space modeling in \ac{gsp}, providing a robust framework for studying dynamic graph topologies.

{\em{Organization and Notations:}}
The remainder of this paper is structured as follows. Section~\ref{sec:System_Model_and_Preliminaries} provides the necessary preliminaries, including foundations in \ac{gsp}, \acp{ssm}, and  \ac{ekf}. In Section \ref{sec:Tracking_in_the_Graph_Frequency_Domain}, we present the proposed sparsity-aware \ac{ekf} framework for tracking dynamic graph topologies, including the problem formulation, observability analysis, algorithmic derivation, and treatment of special cases. Section \ref{ssec:simulations} introduces the numerical study setup and the simulation results. Finally, Section \ref{sec:conclusions} concludes the paper.

In the rest of this paper, vectors and matrices are denoted by boldface lowercase and uppercase letters, respectively. The notations $(\cdot)^T$ and $(\cdot)^{-1
}$ 
denote the transpose and the inverse
operators, respectively.
For a vector $\avec$, 
$a_n$ is its $n$th element,
${\operatorname{diag}}(\avec)$ is a diagonal matrix whose $(n,n)$th entry is $a_n$, the operator $\left\| \avec \right\|_{\Amat}^2 \define \avec^T \Amat \avec$ denotes the quadratic form of the matrix $\Amat$, and $\left\| \avec \right\|_p$ is the $\ell_p$ norm.   
The matrices  $\Imat$ and $\zerovec$ denote the identity and the zero matrices, respectively.

\vspace{-0.2cm}
\section{Preliminaries}
\label{sec:System_Model_and_Preliminaries}
In this section, we review some necessary preliminaries that set the ground for our derivations of the tracking algorithms in the following sections. In particular, we present the \ac{gsp} notations and graph frequency domain filtering in Subsection~\ref{ssec:Graph_Signal_Processing}. 
Then, we review the \ac{ekf}  in Subsection~\ref{ssec:EKF}.

\vspace{-0.2cm}
\subsection{Graph Signal Processing (GSP)}\label{ssec:Graph_Signal_Processing}
\subsubsection{Graph Formulation}
We denote an undirected weighted graph as ${\pazocal{G}}({\pazocal{V}},\xi)$, where ${\mathcal{V}}=\{1,\ldots,N\}$ is the  node set, and ${\xi}$ is the set of $M$ edges with positive weights. 
The set of all possible edges, corresponding to the edge set of the complete graph over ${\pazocal{V}}$, 
is denoted by $\xi_{\rm{c}}$ and contains $\frac{N(N-1)}{2}$ edges, such that $\xi \subseteq \xi_{\rm{c}}$ and $M << \frac{N(N-1)}{2}$.

To model the graph topology, we assign an arbitrary index to each possible edge  $e_m = (n, k) \in \xi_{\rm{c}}$, for any $n, k =1, \ldots, N$, with $k < n$, where $m = 1, \ldots, \maxEdges$. 
This edge connects the vertices $n$ and $k$. 
This arbitrary mapping of node pairs to edge indices is denoted by $\psi:\{\mathcal{V}\}^2\rightarrow\xi_{\rm{c}}$, where   $\psi(n,k)=m$ gives the index of edge $(n,k)$.
Correspondingly, given the nonnegative vector of weights of all edges, $\xvec\in\mathbb{R}^\maxEdges$, its $m$th entry,  $[x]_m$, is the weight of the edge $e_m$,  where  $[x]_m=0$ if $e_m\notin \xi$.

The incidence matrix of the complete graph is $\Bmat \in \mathbb{R}^{N \times \maxEdges}$, where the $(n, m)$th element of $\Bmat$ is given by \cite{newman2018networks}
\beqna
\label{Bmat_def}
B_{n,m} = \begin{cases}
1 & \text{if } e_m = (n, k) \in\xi_{\rm{c}},~~ n \text{ is the source} \\
-1 & \text{if } e_m = (k, n) \in\xi_{\rm{c}},~~ k \text{ is the source} \\
0 & \text{otherwise}
\end{cases},\hspace{-0.1cm}
\eeqna
$\forall n = 1, \ldots, N$ and $m = 1, \ldots, \maxEdges$. 
Then, we can write the graph Laplacian matrix of any graph ${\pazocal{G}}({\pazocal{V}},\xi)$ as
\beqna\label{L2B}
\Lmat \triangleq \Bmat\diag(\xvec) \Bmat^T\in \mathbb{R}^{N \times N},
\eeqna
 where the elements of $\Bmat$ are given in \eqref{Bmat_def}, and $\xvec$  contains information about both the connectivity and the weights of the edges. Correspondingly, $\xvec$ is a sparse vector, with its sparsity level determined by the percentage of existing connections relative to the total number of possible connections. 
The matrix $\Lmat$ is a real positive semidefinite matrix 
with non-positive off-diagonal elements.

\subsubsection{Graph Signals}
A graph signal is a function that 
assigns a scalar value to each node, and thus is 
an $N$-dimensional vector. A graph filter is a function $h(\cdot)$ applied to the Laplacian that allows the following \ac{evd}:
 \begin{equation} \label{laplacian_graph_filter}
  	h(\Lmat)= \Vmat h(\Lambdamat)\Vmat^T,
  \end{equation}
where the columns of $\Vmat$ are the eigenvectors of $\Lmat$, and $h(\Lambdamat)$ is a diagonal matrix. 
It should be noted that  all filters in the form of \eqref{laplacian_graph_filter} that satisfy $h(\lambda_n)=h(\lambda_m)$ for $\lambda_n=\lambda_m$, $n\neq m$ can be represented as a finite polynomial of $\Lmat$ \cite[Remark 3.5]{Ortega_2022}:
\beqna
h(\Lmat)=\sum\nolimits_{p=0}^Pa_p\Lmat^p.
\eeqna
A graph filter is applied on an input graph signal, $\avec_{\rm in}$, such that the output graph signal is \cite{Shuman_Ortega_2013}
\beqna \label{a_in_a_out}
\avec_{\rm out} = h(\Lmat)\avec_{\rm in}.
\eeqna
Graph filters are widely used to represent signal processes over networks in various applications 
\cite{Rama2020Anna,Shuman_Ortega_2013,Sandryhaila2014,8347162,dabush2023state}. 

\vspace{-0.2cm}
\subsection{Extended Kalman Filter (EKF)}
\label{ssec:EKF}
\subsubsection{State-Space Models}
 \acp{ssm} provide a conventional framework for modeling tracking tasks in regular domains (see, e.g., \cite[Ch. 10]{durbin2012time}). This framework  represents dynamic systems with a state vector $\xvec_{\timeIndx} \in \mathbb{R}^{N_x}$ and observations $\yvec_{\timeIndx} \in \mathbb{R}^{N_y}$ by characterizing their relationship as a continuous \ac{ssm} in discrete-time:
\begin{subequations}
\label{eqn:ssmodel}
\beqna\label{transition_equation}
&&\xvec_{\timeIndx} =\fvec_{\timeIndx}(\xvec_{\timeIndx-1}) + \evec_{\timeIndx},
\\\label{observation_model}
&&\yvec_{\timeIndx} = \hvec_{\timeIndx}(\xvec_{\timeIndx}) + \vvec_{\timeIndx}.
\eeqna
\end{subequations}
In \eqref{eqn:ssmodel}, $\evec_{\timeIndx}\in\mathbb{R}^{N_x}$  and $\vvec_{\timeIndx}\in\mathbb{R}^{N_y}$ are temporally \ac{iid}, zero-mean, and mutually independent state-evolution and measurement noises, respectively, at time $\timeIndx$. 
In addition, $\Qmat\in\mathbb{R}^{N_x\times N_x}$ and $\Rmat\in\mathbb{R}^{N_y\times N_y}$ are the covariance matrices of ${\evec}_{\timeIndx}$ and $\vvec_{\timeIndx}$, respectively.
The functions
$\fvec_{\timeIndx}:\mathbb{R}^{N_x}  \mapsto {\mathbb{R}}^{N_x}$  and   $\hvec_{\timeIndx}:\mathbb{R}^{N_x} \mapsto {\mathbb{R}}^{N_y}$ are (possibly nonlinear) state evolution and measurement functions, respectively.

\subsubsection{EKF}
One of the most common algorithms for state estimation in nonlinear \acp{ssm} with Gaussian noise signals is the \ac{ekf} \cite{gruber1967approach}. 
The \ac{ekf} follows the operation of the \ac{kf}, designed for linear \acp{ssm}, combining prediction based on the previous estimate with an update based on the current observation. 

The \ac{ekf} operates in two steps. The first step predicts the first- and second-order moments of the state and the observation based on the previous estimate, where the initial estimate is given by 
\begin{equation}
\label{eqn:Pred}
    \hat{\xvec}_{\timeIndx|\timeIndx-1} = \fvec_{\timeIndx}(\hat{\xvec}_{\timeIndx-1}); \quad \hat{\yvec}_{\timeIndx|\timeIndx-1} = \hvec_{\timeIndx}(\hat{\xvec}_{\timeIndx|\timeIndx-1}),
\end{equation}
and the second-order moments are propagated by 
\begin{subequations}
    \label{eqn:covariance_computation}
\begin{equation}
\label{eqn:state_covariance_computaion}
{\hat{\bf{\Sigma}}}_{\timeIndx|\timeIndx-1}=\hat\Fmat_{\timeIndx}\cdot\hat{\bf{\Sigma}}_{\timeIndx-1}\cdot\hat\Fmat_{\timeIndx}^T+\Qmat,
\end{equation}
\begin{equation}
\label{eqn:obs_covariance_computaion}
{\hat\Smat}_{\timeIndx|\timeIndx-1}=\hat\Hmat_{\timeIndx}\cdot{\hat{\bf{\Sigma}}}_{\timeIndx|\timeIndx-1}\cdot\hat\Hmat_{\timeIndx}^T+\Rmat.
\end{equation}
\end{subequations}
It is assumed that both ${\hat{\bf{\Sigma}}}_{\timeIndx|\timeIndx-1}$ and $\hat\Smat_{\timeIndx|\timeIndx-1}$ are nonsingular matrices. 
The matrices $\hat\Fmat_{\timeIndx}$ and $\hat\Hmat_{\timeIndx}$ are the linearized approximations of $\fvec_{\timeIndx}(\cdot)$ and $\hvec_{\timeIndx}(\cdot)$, respectively, obtained using their Jacobian matrices  evaluated at $\hat\xvec_{\timeIndx-1}$ and $\hat\xvec_{\timeIndx|\timeIndx-1}$ (see \cite[Ch. 10]{durbin2012time}), i.e., 
\begin{align}
\label{eqn:Jacobians}
\hat\Fmat_{\timeIndx}=\nabla_{\xvec}\fvec(\hat\xvec_{\timeIndx-1}); \quad 
\hat\Hmat_{\timeIndx}=
\nabla_{\xvec_{\timeIndx}}\hvec_{\timeIndx}(\hat{\xvec}_{\timeIndx|\timeIndx-1}).
\end{align} 

In the second update step, the predicted state, $ \hat{\xvec}_{\timeIndx|\timeIndx-1}$, is combined with the current observation to update the state estimate. The associated \ac{map} objective can be formulated as (see \cite[Eqn. (37)]{Welling2010}) 
\beqna
\hat{\xvec}_{\timeIndx}=\arg\min\limits_{\xvec\in \mathbb{R}^N}
 \phi_l(\xvec),
\label{EKF_update}
\eeqna
 where 
 \beqna
 \label{eq:phi_def}
 \phi_l(\xvec)\define
\|\yvec_{\timeIndx}-\hvec_{\timeIndx}(\hat{\xvec}_{\timeIndx|\timeIndx-1})-\hat{\Hmat}_{\timeIndx}(\xvec-\hat{\xvec}_{\timeIndx|\timeIndx-1})\|_{\Rmat^{-1}}\nonumber\\+\|\xvec-\hat{\xvec}_{\timeIndx|\timeIndx-1}\|_{{\hat{\bf{\Sigma}}}_{\timeIndx|\timeIndx-1}^{-1}}.
\eeqna
Solving \eqref{EKF_update} results in the updated estimator and the corresponding updated error covariance matrix as follows (see \cite[p. 5]{Welling2010}):
\begin{subequations}
    \label{eqn:updates_computation}
\begin{equation}\label{eqn:EKFUpdate}
 \hat{\xvec}_{\timeIndx} = \Kgain_{\timeIndx}\cdot \Delta \yvec_{\timeIndx} +  \hat{\xvec}_{\timeIndx|\timeIndx-1},
\end{equation}
\begin{equation}
\label{eqn:state_covariance_update}
{\hat{\bf{\Sigma}}}_{\timeIndx}=(\Imat-\Kgain_{\timeIndx}\cdot\hat\Hmat_{\timeIndx}){\hat{\bf{\Sigma}}}_{\timeIndx|\timeIndx-1}(\Imat-\Kgain_{\timeIndx}\cdot\hat\Hmat_{\timeIndx})^T+\Kgain_{\timeIndx}\cdot\Rmat\cdot\Kgain_{\timeIndx}^T,
\end{equation}
\end{subequations}
where $\Delta \yvec_\timeIndx \triangleq \yvec_{\timeIndx} -  \hat{\yvec}_{\timeIndx|\timeIndx-1}$.  The \ac{kg}, $ \Kgain_{\timeIndx}$, dictates the balance between relying on the state evolution function $\fvec_{\timeIndx}(\cdot)$ through \eqref{eqn:Pred} and the current observation $\yvec_{\timeIndx}$, and is given by 
\begin{equation}
\label{eqn:kalman_gain_computaion}
    \Kgain_{\timeIndx} = \hat{\bf{\Sigma}}_{\timeIndx|\timeIndx-1}\cdot\hat\Hmat_{\timeIndx}^T\cdot \hat{\Smat}_{\timeIndx|\timeIndx-1}^{-1}.
\end{equation} 

\vspace{-0.2cm}
\section{EKF-Aided Topology Tracking}
\label{sec:Tracking_in_the_Graph_Frequency_Domain}
 \vspace{-0.05cm}
In this section, we introduce the proposed graph topology tracking framework. We begin by introducing the considered measurement model of the topology dynamic in Subsection~\ref{ssec:Problem_Formulation}. Based on this model, we derive the corresponding \ac{ssm} formulation, and discuss its observability conditions and sparsity priors in Subsection~\ref{ssec:observability}. 
We then introduce the proposed sparsity-aware \ac{ekf} algorithm in Subsection~\ref{ssec:EKFAlg} and explore some special cases in Subsection~\ref{ssec:SpecialCases}. Finally, we discuss the algorithm's computational complexity in Subsection~\ref{ssec:Discussion}.

 \vspace{-0.2cm}
\subsection{Problem Formulation and \ac{ssm} Representation}\label{ssec:Problem_Formulation}
\subsubsection{Signal Model}
 \vspace{-0.05cm}
We consider a time-varying, undirected, weighted graph, ${\pazocal{G}}_\timeIndx({\pazocal{V}},\xi_\timeIndx)$, where $\timeIndx$ is a time index, ${\mathcal{V}}=\{1,\ldots,N\}$ denotes the stationary node set, and ${\xi_\timeIndx}=\{1,\ldots,M_\timeIndx\}$ is the set of edges with positive weights at time $\timeIndx$.  

In order to track topology changes, we 
use a \ac{gsp} model of a graph filtering process; this approach is widely used for the representation and processing of various networked signals    \cite{Sandryhaila2014,Vassilis_2016,dong2020graph}, such as power systems \cite{Rama2020Anna,GlobalSIP_Drayer_Routtenberg}, 
 water distribution networks  \cite{water_gsp}, and traffic density analysis in road networks \cite{traffic_diffusion}.
Mathematically,  at time ${\timeIndx}$, the  observed graph signal is given by
\beqna\label{linear_model}
\yvec_{{\timeIndx}}=h(\Lmat_{{\timeIndx}})\qvec_{{\timeIndx}}+\vvec_{{\timeIndx}}, 
\eeqna
where 
$\qvec_{{\timeIndx}}\in\mathbb{R}^{N}$ and  $\yvec_{{\timeIndx}}\in\mathbb{R}^{N}$ are the input and output vectors measured on the graph, and thus are  {\em{graph signals}}. 
The measurement noise,  $\vvec_{{\timeIndx}}\in\mathbb{R}^{N}$, is assumed to be temporally \ac{iid}, zero-mean Gaussian noise with covariance matrix $\Rmat$, and independent of the graph topology. 
The graph filter $h(\cdot)$ is assumed to be known and defined as in \eqref{laplacian_graph_filter}, while
 the Laplacian matrices $\Lmat_{{\timeIndx}}\in\mathbb{R}^{N\times N}$, $\forall \timeIndx$ are  unknown and generally vary over time.

 To formulate the problem of tracking a dynamic graph topology based on the measurement model in \eqref{linear_model}, we use the incidence matrix representation of the Laplacian as in \eqref{L2B}, i.e.,
\beqna\label{L2B_time_index}
\Lmat_{\timeIndx} = \Bmat\diag(\xvec_{\timeIndx}) \Bmat^T,
\eeqna
 Correspondingly, $\xvec_\timeIndx \in \mathbb{R}^\maxEdges$ is a nonnegative vector that contains the weights of all edges at time $\timeIndx$ (including the zero-weight disconnected edges), where the entry $[\xvec_\timeIndx]_m$ corresponds to the weight of edge $e_m$ at time $\timeIndx$ and is zero if $e_m \notin \xi_{\timeIndx}$,  with arbitrary mapping $\psi$ as described in Subsection \ref{ssec:Graph_Signal_Processing}. 
In the representation in \eqref{L2B_time_index}, the incidence matrix $\Bmat$ remains constant throughout the analysis, and we only need to track $\xvec_\timeIndx$.

By substituting the incidence matrix representation \eqref{L2B_time_index} at time $\timeIndx$ into \eqref{linear_model}, the measurement model can be written as follows:
\begin{subequations}
    \label{eq_model}
\beqna\label{eq_not_simplified_model}
\yvec_{{\timeIndx}}=h(\Bmat\diag(\xvec_{{\timeIndx}})\Bmat^T)\qvec_{{\timeIndx}}+\vvec_{{\timeIndx}},
\eeqna
where the edge-weight vector, $\xvec_{{\timeIndx}}$, is the state vector that evolves over time. Thus, the state vector is sparse where the zeros are associated with non-existing edges at this time, and the nonzero elements are the positive edge weights.  
Hence, our goal of tracking the topology can be formulated as tracking  $\xvec_{\timeIndx}$ based on the observed $\{\yvec_{\tau}\}_{\tau\leq {\timeIndx}}$ and knowledge of $\{\qvec_{\tau}\}_{\tau\leq {\timeIndx}}$ and the mapping $\psi$.

\subsubsection{\ac{ssm} Formulation}
We consider scenarios in which the graph topology evolves gradually over time. To capture this behavior, we model the edge-weight vector 
$\xvec_{{\timeIndx}}$ as a first-order Markov process. Together with the measurement model in \eqref{eq_not_simplified_model}, this yields a
nonlinear \ac{ssm} in which the observations are governed by the graph filter, and the underlying topology is treated as the latent state. 
Consequently, the state evolution model is described by 
\beqna  \label{eq_state_evolution}
\xvec_{{\timeIndx}} &=& \fvec_{\timeIndx}(\xvec_{{\timeIndx}-1}) + \evec_{{\timeIndx}},
\eeqna
\end{subequations}
where $\fvec_{\timeIndx}:\mathbb{R}^{\maxEdges} \mapsto {\mathbb{R}}^{\maxEdges}$,  is a known nonlinear state evolution function. In addition, $\evec_{\timeIndx}\in\mathbb{R}^{\maxEdges}$  is zero-mean, temporally \ac{iid} state-evolution noise at time $\timeIndx$, with a known covariance matrix $\Qmat$. The process noise is assumed to be statistically independent of the measurement noise $\vvec_{\timeIndx}$.

\subsubsection{Special Case -  Linear Model}
\label{sssec:LinearExample}
Consider the special case where the state evolution function $\fvec_{\timeIndx}(\xvec_{\timeIndx-1})$ and the observation function $\hvec(\xvec_{\timeIndx})$ are linear.
Thus,  $ \fvec_{\timeIndx}(\xvec_{{\timeIndx}-1})=\Fmat_{\timeIndx}\xvec_{{\timeIndx}-1}$,
and since $\hvec(\xvec_{\timeIndx})$ is a graph filter, it has to be a first-order graph filter, $h(\Lmat_{{\timeIndx}})=a_0\Imat+a_1\Bmat \diag(\xvec_{{\timeIndx}}) \Bmat^T$, where $a_0,a_1\in\mathbb{R}$  are general coefficients.
In this case, the \ac{ssm} \eqref{eq_model} can be expressed as  
\begin{subequations}
    \label{eq_linear_model}
\beqna
\xvec_{{\timeIndx}} &=& \Fmat_{\timeIndx}\xvec_{{\timeIndx}-1} + \evec_{{\timeIndx}}, \\
\label{17b}
\yvec_{{\timeIndx}} &=& \big(a_0\Imat+a_1\Bmat \diag(\xvec_{{\timeIndx}}) \Bmat^T \big)\qvec_{{\timeIndx}} + \vvec_{{\timeIndx}}.
\eeqna  
\end{subequations}
The following claim states that this special case can be referred to as a linear \ac{ssm} with the latent state $\xvec_\timeIndx$.  

\begin{Claim}\label{claim1}  
   System  \eqref{eq_linear_model} represents a linear \ac{ssm} \cite{chouzenoux2023sparse} of the form  
   \begin{subequations}
       \label{eq_model2}
   \beqna
   &&\xvec_{{\timeIndx}} =\Fmat_{\timeIndx}\xvec_{{\timeIndx}-1}+ \evec_{{\timeIndx}},\\
   &&\yvec_{{\timeIndx}}=\Hmat_{{\timeIndx}}\xvec_{{\timeIndx}}+\cvec_{\timeIndx}+\vvec_{{\timeIndx}},
   \eeqna  
   \end{subequations}
   where $\Hmat_{{\timeIndx}}\define a_1\Bmat\diag(\Bmat^T\qvec_{{\timeIndx}})$ and $\cvec_{\timeIndx}\define a_0\qvec_\timeIndx$.
\end{Claim}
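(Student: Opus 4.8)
The plan is to observe that the two systems in \eqref{eq_linear_model} and \eqref{eq_model2} share an \emph{identical} state-evolution equation, so the entire claim reduces to verifying that the observation equation \eqref{17b} can be recast in the affine-in-state form asserted for $\yvec_\timeIndx$ in \eqref{eq_model2}. First I would expand the right-hand side of \eqref{17b} by linearity, separating the constant and the state-dependent contributions:
\be
\yvec_{\timeIndx} = a_0\qvec_{\timeIndx} + a_1 \Bmat\diag(\xvec_{\timeIndx})\Bmat^T\qvec_{\timeIndx} + \vvec_{\timeIndx}.
\ee
The first term is immediately identified with $\cvec_\timeIndx = a_0\qvec_\timeIndx$, and the additive noise $\vvec_\timeIndx$ carries over unchanged; what remains is to show that the middle term equals $\Hmat_\timeIndx\xvec_\timeIndx$ with $\Hmat_\timeIndx = a_1\Bmat\diag(\Bmat^T\qvec_\timeIndx)$.

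The key step is the elementary commutation identity for diagonal matrices: for any two vectors $\avec,\bvec\in\mathbb{R}^{\maxEdges}$ one has $\diag(\avec)\bvec = \avec\odot\bvec = \diag(\bvec)\avec$, since both sides equal the entrywise (Hadamard) product. Applying this with $\avec = \xvec_\timeIndx$ and $\bvec = \Bmat^T\qvec_\timeIndx$ gives $\diag(\xvec_\timeIndx)\,\Bmat^T\qvec_\timeIndx = \diag(\Bmat^T\qvec_\timeIndx)\,\xvec_\timeIndx$. Substituting this into the middle term yields $a_1\Bmat\diag(\Bmat^T\qvec_\timeIndx)\xvec_\timeIndx = \Hmat_\timeIndx\xvec_\timeIndx$, which completes the reduction to the form \eqref{eq_model2}.

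There is essentially no genuine obstacle here; the single point requiring care is recognizing that the state $\xvec_\timeIndx$ can be ``pulled out'' of the diagonal operator by exchanging its role with the \emph{fixed} vector $\Bmat^T\qvec_\timeIndx$, which is exactly what the Hadamard-commutation identity provides. The payoff of this rearrangement is that the resulting $\Hmat_\timeIndx$ depends only on the known input $\qvec_\timeIndx$ and the fixed incidence matrix $\Bmat$, and not on $\xvec_\timeIndx$ itself; this is precisely what certifies that the observation map is affine in the state and hence that \eqref{eq_linear_model} is a bona fide linear \ac{ssm} of the form \eqref{eq_model2}.
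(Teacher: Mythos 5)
Your proof is correct and follows essentially the same route as the paper's: both rest on the commutation identity $\diag(\avec)\bvec=\diag(\bvec)\avec$ applied with $\avec=\xvec_{\timeIndx}$ and $\bvec=\Bmat^T\qvec_{\timeIndx}$ to pull the state out of the diagonal operator in \eqref{17b}. Your write-up merely spells out the linearity/separation step that the paper leaves implicit.
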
  

\begin{proof}  
The result follows by substituting the identity $\operatorname{diag}(\avec)\bvec=\operatorname{diag}(\bvec)\avec$ for any two vectors $\avec,\bvec\in\mathbb{R}^N$ into \eqref{17b}, with $\avec=\xvec_{{\timeIndx}}$ and $\bvec=\Bmat^T\qvec_{{\timeIndx}}$.  
\end{proof}  

In this paper,  we assume a nonlinear formulation as in \eqref{eq_model}. Nonetheless, in certain analyses, such as those presented in the following subsection, we focus specifically on the linear case.  

 \vspace{-0.2cm}
\subsection{Observability Analysis}
\label{ssec:observability}
\vspace{-0.05cm}
In this subsection, we analyze the observability of the linear system from \eqref{eq_model2}. Thus, we aim to determine under what conditions the state vector $\xvec_\timeIndx$ can be uniquely recovered from the output over a time horizon of length $T$.
We demonstrate that even in this simple linear setting, observability requires a large number of observations, which limits the
ability to detect rapid changes in the graph topology. This motivates the need for additional structural assumptions, such as sparsity.

The following theorem presents the standard observability condition for time-varying discrete-time linear systems (see \cite[Ch. 24.3]{dahleh2004lectures}) adapted for the system in \eqref{eq_model2}.
A state $\xvec_{\timeIndx}$ is $T$-step observable if it can be uniquely recovered from the output sequence $\{\yvec_k\}_{k=\timeIndx}^{\timeIndx+T-1}$ in the noiseless case of $\evec_{\timeIndx} = \mathbf{0}$ and $\vvec_{\timeIndx} = \mathbf{0}$ for all $\timeIndx$.

\begin{theorem}
\label{observbility}
Consider the time-varying linear system given in \eqref{eq_model2}.
A sufficient and necessary condition for $T$-step observability in a discrete time system from \eqref{eq_linear_model} is that the observability matrix 
\beqna
\label{matrix_O}
\vspace{-0.15cm}
\mathbf{O}_{\timeIndx}^{(T)} =
\begin{bmatrix}
\mathbf{H}_{{\timeIndx}} \\
\mathbf{H}_{{\timeIndx}+1}\Fmat_{{\timeIndx+1}} \\
\vdots \\
\mathbf{H}_{{\timeIndx+T-1}}\Fmat_{{\timeIndx}+T-1}\cdots \Fmat_{{\timeIndx}+1}
\end{bmatrix}
\vspace{-0.15cm}
\eeqna
has full rank such that $ \text{rank}(\mathbf{O}_{\timeIndx}^{(T)}) = \maxEdges $. 
\end{theorem}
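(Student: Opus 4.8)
The plan is to reduce the recovery problem to inverting a single linear map whose matrix is exactly $\mathbf{O}_{\timeIndx}^{(T)}$. First I would specialize the system \eqref{eq_model2} to the noiseless setting $\evec_k=\zerovec$, $\vvec_k=\zerovec$ for all $k$, so that the dynamics become purely deterministic. Since the affine offset $\cvec_k=a_0\qvec_k$ is a known quantity for every $k$, I would introduce the corrected outputs $\tilde{\yvec}_k\define\yvec_k-\cvec_k=\Hmat_k\xvec_k$. Subtracting a known deterministic term does not change which states are distinguishable from the measurements, so observability of \eqref{eq_model2} is equivalent to observability of the homogeneous output map $\tilde{\yvec}_k=\Hmat_k\xvec_k$. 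This is the step that justifies why the criterion involves only $\mathbf{O}_{\timeIndx}^{(T)}$ and not $\cvec_k$.

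Next I would iterate the noiseless transition $\xvec_k=\Fmat_k\xvec_{k-1}$ forward from the reference state $\xvec_{\timeIndx}$, giving the closed form $\xvec_{\timeIndx+j}=\Fmat_{\timeIndx+j}\cdots\Fmat_{\timeIndx+1}\,\xvec_{\timeIndx}$ for $j=1,\ldots,T-1$, with the empty product ($j=0$) interpreted as $\Imat$. Substituting each propagated state into the corrected observation equation yields $\tilde{\yvec}_{\timeIndx+j}=\Hmat_{\timeIndx+j}\Fmat_{\timeIndx+j}\cdots\Fmat_{\timeIndx+1}\,\xvec_{\timeIndx}$, and the coefficient matrix appearing here is precisely the $(j+1)$th block row of $\mathbf{O}_{\timeIndx}^{(T)}$ as written in \eqref{matrix_O}. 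Stacking the corrected outputs over the window produces the compact relation $\bar{\yvec}=\mathbf{O}_{\timeIndx}^{(T)}\,\xvec_{\timeIndx}$, where $\bar{\yvec}\define[\tilde{\yvec}_{\timeIndx}^T,\ldots,\tilde{\yvec}_{\timeIndx+T-1}^T]^T$ collects all measurements available over the horizon of length $T$.

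The conclusion then follows from the standard injectivity argument for linear maps. The state $\xvec_{\timeIndx}$ is uniquely recoverable from $\bar{\yvec}$ if and only if the map $\xvec_{\timeIndx}\mapsto\mathbf{O}_{\timeIndx}^{(T)}\xvec_{\timeIndx}$ is injective; because $\xvec_{\timeIndx}\in\mathbb{R}^{\maxEdges}$, injectivity is equivalent to $\mathbf{O}_{\timeIndx}^{(T)}$ having a trivial null space, i.e. full column rank $\text{rank}(\mathbf{O}_{\timeIndx}^{(T)})=\maxEdges$. For sufficiency, full rank makes $\mathbf{O}_{\timeIndx}^{(T)}$ left-invertible, so $\xvec_{\timeIndx}$ is obtained explicitly as $\xvec_{\timeIndx}=(\mathbf{O}_{\timeIndx}^{(T)})^{\dagger}\bar{\yvec}$. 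For necessity, any nonzero $\deltavec$ in the null space would make $\xvec_{\timeIndx}$ and $\xvec_{\timeIndx}+\deltavec$ generate identical output sequences, contradicting unique recovery.

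I do not anticipate a genuine obstacle, since this is the time-varying analogue of the classical Kalman observability criterion; the only points requiring care are the ordering of the transition products in \eqref{matrix_O} and the explicit verification that the known offset $\cvec_k$ is irrelevant to distinguishability. The most interesting structural consequence, worth flagging, is purely dimensional: each block $\Hmat_{\timeIndx+j}\in\mathbb{R}^{N\times\maxEdges}$, so $\mathbf{O}_{\timeIndx}^{(T)}\in\mathbb{R}^{TN\times\maxEdges}$ can attain rank $\maxEdges$ only if $TN\geq\maxEdges$, i.e. $T\gtrsim(N-1)/2$. This quantifies the claim preceding the theorem that observability demands a large horizon, and motivates the sparsity prior introduced later when $T$ is too small for $\mathbf{O}_{\timeIndx}^{(T)}$ to reach full rank.
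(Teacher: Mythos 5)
Your proposal is correct and follows essentially the same route as the paper's proof: set the noise to zero, absorb the known offset $\cvec_k$ into the observations, recursively substitute the state transitions to stack the measurements into $\bar{\yvec}=\mathbf{O}_{\timeIndx}^{(T)}\xvec_{\timeIndx}$, and conclude that unique recovery is equivalent to full column rank. Your explicit injectivity/null-space argument and the closing dimension count ($TN\geq\maxEdges$) merely spell out what the paper delegates to the cited reference and to Corollary~\ref{cor:Obslimit}, respectively.
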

\begin{proof}
The proof can be found in Appendix \ref{app:proof_observability}.
\end{proof}

Theorem~\ref{observbility} shows that the observability of the linear \ac{ssm} depends on the rank of the time-varying matrices $\Fmat_{\timeIndx}$ and $\Hmat_{\timeIndx}$, and on the observation horizon length $T$. The following corollary provides a necessary condition on the latter.
\begin{corollary}
\label{cor:Obslimit}
A necessary condition for the observability of the \ac{ssm} \eqref{eq_model2} is that the number of observations satisfies $T \geq 
\frac{N-1}{2}$.
\end{corollary}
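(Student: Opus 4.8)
The plan is to treat this as a direct dimension-counting consequence of the rank criterion established in Theorem~\ref{observbility}. First I would record the shape of each block appearing in the observability matrix. By Claim~\ref{claim1}, the observation matrix is $\Hmat_{\timeIndx}=a_1\Bmat\diag(\Bmat^T\qvec_{\timeIndx})$, and since the incidence matrix of the complete graph satisfies $\Bmat\in\mathbb{R}^{N\times\maxEdges}$, each factor $\Hmat_{\timeIndx+k}$ is a matrix of size $N\times\maxEdges$. Multiplying on the right by the square transition matrices $\Fmat_{\timeIndx+k}\cdots\Fmat_{\timeIndx+1}\in\mathbb{R}^{\maxEdges\times\maxEdges}$ preserves this shape, so every block row of $\mathbf{O}_{\timeIndx}^{(T)}$ in \eqref{matrix_O} is again $N\times\maxEdges$.

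Next I would simply count. Stacking $T$ such blocks, the observability matrix $\mathbf{O}_{\timeIndx}^{(T)}$ has $TN$ rows and $\maxEdges$ columns. The elementary rank bound then gives $\mathrm{rank}(\mathbf{O}_{\timeIndx}^{(T)})\leq\min(TN,\maxEdges)\leq TN$. Theorem~\ref{observbility} requires $\mathrm{rank}(\mathbf{O}_{\timeIndx}^{(T)})=\maxEdges$ for $T$-step observability, so a necessary condition is $TN\geq\maxEdges=\frac{N(N-1)}{2}$. Dividing both sides by $N$ yields $T\geq\frac{N-1}{2}$, which is exactly the asserted bound.

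Because this is a pure counting argument, I do not expect a genuine obstacle; the only step requiring care is correctly identifying the row dimension $N$ of each block, and the conceptual content is that every time step contributes only the $N$ scalar measurements encoded in the $N$ rows of $\Hmat_{\timeIndx}$, whereas the latent edge-weight state $\xvec_{\timeIndx}$ carries $\frac{N(N-1)}{2}$ unknowns. Hence at least $\frac{N-1}{2}$ time steps must elapse before the number of available equations can even reach the number of unknowns. I would close by emphasizing that this condition is necessary but not sufficient: attaining full column rank additionally requires the stacked blocks to be linearly independent, which depends on the particular choices of $\Fmat_{\timeIndx}$, $\qvec_{\timeIndx}$, and the structure of $\Bmat$, and this is precisely the gap that motivates imposing sparsity on $\xvec_{\timeIndx}$.
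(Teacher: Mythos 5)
Your proposal is correct and follows essentially the same route as the paper: the paper's proof likewise observes that $\mathbf{O}_{\timeIndx}^{(T)}$ has dimensions $TN\times\maxEdges$, so full column rank forces $TN\geq\maxEdges=\frac{N(N-1)}{2}$, i.e., $T\geq\frac{N-1}{2}$. Your additional verification of the block shapes via Claim~\ref{claim1} and your closing remark that the condition is necessary but not sufficient are sound elaborations of the same counting argument.
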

\begin{proof}
    The result follows  from the fact that the matrix $\mathbf{O}_{\timeIndx}^{(T)}$  in \eqref{matrix_O} has dimensions
    $TN\times\maxEdges$. Thus, for $\mathbf{O}_{\timeIndx}^{(T)}$ to have full column rank, it is necessary that $TN \geq \maxEdges$.
\end{proof}

Corollary~\ref{cor:Obslimit} reveals a fundamental limitation:  the required number of observations grows quadratically with the number of nodes. This imposes a bound on how rapidly topology changes can be detected using standard \ac{ssm} techniques. In particular, at least $\left\lceil\frac{N-1}{2}\right\rceil$ measurements are required to ensure observability.
This limit stems from the fact that the \ac{ssm} treats the state $\xvec_{\timeIndx}$, i.e., the graph topology, as an arbitrary unstructured vector. However, many real-world graphs are typically sparse, with only a small number of active edges. Thus, the state vector $\xvec_{\timeIndx}$ is typically sparse, with only a few nonzero entries corresponding to active connections, i.e., \( |\xi_\timeIndx| \ll \maxEdges \).
Under such sparsity, full-rank observability of $\mathbf{O}_{\timeIndx}^{(T)}$ is not required for exact state recovery. Instead, compressed sensing techniques can be employed to recover $\xvec_{\timeIndx}$ even when $\mathbf{O}_{\timeIndx}^{(T)}$ is rank-deficient \cite{elad2010sparse}.

This insight extends to the nonlinear \ac{ssm} described in \eqref{eq_model},  where incorporating sparsity constraints can significantly reduce the number of required observations to track the state. 
This can be achieved by using a constraint of the form 
$\|\xvec_{\timeIndx}\|_0\leq K$ for some constant $K>0$ and employing sparsity-aware algorithms.

\vspace{-0.2cm}
\subsection{EKF-Based Topology Tracking}
\label{ssec:EKFAlg}
\vspace{-0.05cm}
\subsubsection{High-Level Algorithmic Framework}
We leverage the nonlinear \ac{ssm} representation of the graph topology dynamics from Subsection~\ref{ssec:Problem_Formulation} and exploit the sparsity of the state vector $\xvec_{\timeIndx}$ to design a topology tracking algorithm. In particular, we extend the \ac{ekf} to accommodate our graph-based signal model 
 while incorporating sparsity-promoting mechanisms.

As presented in Subsection \ref{ssec:EKF}, the \ac{ekf} consists of two steps: prediction and update. 
Since the state evolution in our model \eqref{eq_model} follows the standard nonlinear form given in \eqref{transition_equation},   the prediction step 
 proceeds as in the conventional \ac{ekf}. Thus,
the predicted state mean and covariance, 
$\hat{\xvec}_{\timeIndx|\timeIndx-1}$ and $\Pmat_{\timeIndx|\timeIndx-1}$, are computed using the standard \ac{ekf} equations given in \eqref{eqn:Pred} and \eqref{eqn:covariance_computation}, respectively.

The key modification to the conventional \ac{ekf} lies in the update step,  which accounts for the graph-based formulation and the sparsity of
the state vector.
For brevity, we introduce the notation 
\beqna\label{vectorial_h_graph_filter}
\hvec_{\timeIndx}(\xvec_{\timeIndx})\define h(\Bmat \diag(\xvec_{\timeIndx}) \Bmat^T) \qvec_{{\timeIndx}},\eeqna 
where $\hvec_{\timeIndx}:\mathbb{R}^{\maxEdges} \rightarrow \mathbb{R}^N$ represents the graph filtering operation.
Using the measurements model in \eqref{eq_model}, the update step is formulated as a regularized optimization problem, based on \eqref{EKF_update} with a sparsity-inducing prior on the edge-weight vector $\xvec_{\timeIndx}$. 
Following conventional convex relaxations of the $\ell_0$-(quasi) norm optimization~\cite{Polyak2013}, the sparsity-regularized optimization problem is formulated
  as follows:
 \beqna\label{ekf_update_l1_pgd}
 \hat{\xvec}_{\timeIndx}=\arg\min\limits_{\xvec\in \mathbb{R}^N}
 \phi_l(\xvec)+\mu\|\xvec\|_1,
 \eeqna
where $ \phi_l(\cdot)$ is defined in \eqref{eq:phi_def}, 
and $\mu$ is a regularization parameter that controls the sparsity level. Next, to solve \eqref{ekf_update_l1_pgd}, we describe the optimization procedure and the computation of \eqref{ekf_update_l1_pgd}, and then we elaborate on the local linearization term $\hat{\Hmat}_{\timeIndx}=\nabla_{\xvec}\hvec_{\timeIndx}(\hat{\xvec}_{\timeIndx|\timeIndx-1})$, which is required for the computation of $\phi_l(\cdot)$.

 \subsubsection{Solving the Sparsity-Regularized \ac{ekf} Update in \eqref{ekf_update_l1_pgd}}
The optimization problem in \eqref{ekf_update_l1_pgd} is convex, but nonsmooth due to the $\ell_1$-norm regularizer. Thus, while it lacks a closed-form solution \cite{hastie2015statistical}, its separable form enables efficient solvers via proximal methods. Accordingly, we employ the \ac{pgd} algorithm~\cite[p. 148]{parikh2014proximal} to iteratively solve \eqref{ekf_update_l1_pgd}. 
At each iteration,  the \ac{pgd} algorithm computes the gradient of the smooth term 
 $\phi_l(\xvec)$, $ \nabla_\xvec\phi_l(\xvec)$, and then applies the proximal mapping associated with the $\ell_1$-norm regularizer to the gradient step update. Recalling that the proximal mapping for the $\ell_1$ norm can be computed using the soft thresholding operator~\cite[Ch. 7.1]{parikh2014proximal}, 
\begin{equation}
    T_{\beta}(x) \triangleq \text{sign}(x) \cdot \max(0, |x| - \beta),
\end{equation}
 we obtain that the $m$th iteration of \ac{pgd} algorithm is given by
\begin{equation}\label{proxy_solution}
\hat{\xvec}^{(m+1)}_{\timeIndx}=
T_{\mu \rho^{(m)}} \left(\hat{\xvec}^{(m)}_{\timeIndx}-\rho^{(m)}\nabla_{\xvec}\phi_{\timeIndx}(\hat{\xvec}^{(m)}_{\timeIndx})\right).
\end{equation}
In \eqref{proxy_solution}, $\rho^{(m)}>0$ denotes the step size at the $m$th iteration.

When $\phi_l(\cdot)$ is an $\ell_2$ norm loss, the iteration in \eqref{proxy_solution} reduces to the well-known  \ac{ista}~\cite{ista2003}. In our setting, however, $\phi_l(\cdot)$ is  a sum of weighted $\ell_2$ norms (see \eqref{eq:phi_def}), whose gradient is
 \beqna\label{gradient_phi1}
 \nabla_\xvec\phi_l(\xvec)= 2
\left(\hat{\Hmat}_{\timeIndx}^T\Rmat^{-1}\hat{\Hmat}_{\timeIndx}+\hat{\bf{\Sigma}}_{\timeIndx|\timeIndx-1}^{-1}\right)(\xvec-\hat{\xvec}_{\timeIndx|\timeIndx-1})\nonumber\\-2\hat{\Hmat}_{\timeIndx}^T\Rmat^{-1}(\yvec_{\timeIndx}-\hvec_{\timeIndx}(\hat{\xvec}_{\timeIndx|\timeIndx-1})).
\eeqna
Despite the weighted form, the resulting iterative procedure is similar to the \ac{ista} in structure. Thus, we refer to the proposed sparsity-aware \ac{ekf} update rule via \eqref{proxy_solution}-\eqref{gradient_phi1} as {\em \ac{ista}-Update}.  This procedure is detailed in Algorithm~\ref{alg:ISTA-EKF}, where we initialize the iterations using the unregularized \ac{ekf} estimate from \eqref{eqn:updates_computation}.

\begingroup
\setlength{\textfloatsep}{5pt}
\begin{algorithm}
    \caption{ISTA-Update at time $\timeIndx$}
    \label{alg:ISTA-EKF} 
    \SetKwInOut{Input}{Input}
    \SetKw{KwRet}{Return}
    \Input{
    \begin{enumerate}[leftmargin=15pt]
    \item Iterations number $M$, step sizes $\{\rho^{(m)}\}_{m=0}^{M-1}$, coefficient $\mu$.
    \item Prediction estimator and 
    covariance: $\hat{\xvec}_{\timeIndx|\timeIndx-1}$, $\hat{\bf{\Sigma}}_{\timeIndx|\timeIndx-1}$.
    \item Functions $\fvec_{\timeIndx}(\cdot)$ and $h(\Bmat\operatorname{diag}(\cdot)\Bmat^T)$.
    \item Measurements $\yvec_{\timeIndx},\qvec_{\timeIndx}$.
    \item The Jacobian $\hat{\Hmat}_{\timeIndx}$ and \ac{ssm} parameters $\Qmat$ and $\Rmat$.
\end{enumerate}
      }  
        Set $\xvec_{\timeIndx}^{(0)}$ using \eqref{EKF_update}\;
        \For{$m=0,1,\ldots, M-1$}
        {%
        \begin{enumerate}[leftmargin=15pt]
            \item 

     Substitute $\hat{\xvec}^{(m)}_{\timeIndx}$ in \eqref{gradient_phi1} to obtain 
$\nabla_{\xvec}\phi_{\timeIndx}(\hat{\xvec}^{(m)}_{\timeIndx})$.

\item Calculate $
\hat{\xvec}^{(m+1)}_{\timeIndx}=
T_{\mu \rho^{(m)}} \left(\hat{\xvec}^{(m)}_{\timeIndx} \!-\!\rho^{(m)}\nabla_{\xvec}\phi_{\timeIndx}(\hat{\xvec}^{(m)}_{\timeIndx})\right)$.
        \end{enumerate}
}
\textbf{Output} 
 {$\hat{\xvec}^{(M)}_{\timeIndx}$}
\end{algorithm}
\endgroup
\subsubsection{Observations Jacobian Computation}
The \ac{ekf} update step requires the gradient of $\hvec_{\timeIndx}(\xvec)$ from \eqref{vectorial_h_graph_filter} \ac{wrt} $\xvec_{\timeIndx}$. While in general, the derivation of the Jacobian of a high-dimensional nonlinear function can be computationally intensive, in our setting, the measurement function in \eqref{vectorial_h_graph_filter} is a structured graph filter,  $h(\cdot)$,  which enables a more efficient Jacobian computation, as detailed below.

{\bf Jacobian of Graph Filter:}
The Jacobian of the graph filter \ac{wrt} edge weights quantifies how small changes in the graph topology affect the filtered signal. Due to the structure of graph filters and the linear dependence of the Laplacian on the edge weights, each partial derivative can be expressed using powers of the Laplacian, which enables an efficient recursive computation.

We utilize the polynomial representation of graph filters \cite{Sandryhaila2014}, which allows us to express the observation model  \eqref{vectorial_h_graph_filter} as 
\begin{equation}
\label{eqn:GraphFiltRep}
    \hvec_{\timeIndx}(\xvec_{\timeIndx}) = 
   h(\Lmat_{\timeIndx}) \qvec_l 
   = \sum_{p=0}^P a_p \Lmat_{\timeIndx}^p \qvec_{\timeIndx}, 
\end{equation}
where $P$ is the filter order, $\{a_p\}_{p=0}^P$ are the stationary filter coefficients, and the relation between  $\xvec_{\timeIndx}$ and $\Lmat_{\timeIndx}$ is defined in \eqref{L2B_time_index}.
Using this representation, the closed-form expression of the Jacobian is stated in the following lemma.
\begin{Lemma}\label{lemma_jacobian}  
The 
$m$th column of the Jacobian matrix  
$\nabla_{\xvec_{\timeIndx}}\hvec_{\timeIndx}(\xvec_{\timeIndx})$ for the representation in \eqref{eqn:GraphFiltRep} is given by  
\begin{equation}  
\label{jacobain_poly}  
[\nabla_{\xvec_{\timeIndx}}\hvec_{\timeIndx}(\xvec_{\timeIndx})]_{:,m} =
\sum_{p=1}^P a_p \sum_{k=0}^{p-1} {\Lmat}_{\timeIndx} ^k 
(\Bmat_{:,m} \Bmat_{:,m}^T)
{\Lmat}_{\timeIndx}^{p-k-1}\qvec_{\timeIndx},
\end{equation}  
where ${\Lmat}_{\timeIndx}$ is defined in \eqref{L2B_time_index}, and $\Bmat_{:,m}\in\mathbb{R}^{N}$ 
denotes the $m$th column of the incidence matrix $\Bmat$.
\end{Lemma}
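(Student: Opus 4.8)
The plan is to exploit the fact that the Laplacian $\Lmat_{\timeIndx}$ in \eqref{L2B_time_index} depends \emph{linearly} on the edge-weight vector $\xvec_{\timeIndx}$, and then to propagate this linear dependence through the polynomial filter \eqref{eqn:GraphFiltRep} via the product rule for matrix powers. First I would rewrite the Laplacian as a sum of rank-one terms indexed by the edges. Since $\diag(\xvec_{\timeIndx})$ places $[\xvec_{\timeIndx}]_j$ in its $j$th diagonal entry, the outer-product expansion gives
\begin{equation}
\Lmat_{\timeIndx} = \Bmat\diag(\xvec_{\timeIndx})\Bmat^T = \sum_{j=1}^{\maxEdges} [\xvec_{\timeIndx}]_j\, \Bmat_{:,j}\Bmat_{:,j}^T,
\end{equation}
where $\Bmat_{:,j}$ is the $j$th column of $\Bmat$. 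Differentiating with respect to the scalar $[\xvec_{\timeIndx}]_m$ then isolates a single term and yields the constant matrix $\partial \Lmat_{\timeIndx}/\partial [\xvec_{\timeIndx}]_m = \Bmat_{:,m}\Bmat_{:,m}^T$, independent of $\xvec_{\timeIndx}$.

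Next I would differentiate each monomial $\Lmat_{\timeIndx}^p$ appearing in \eqref{eqn:GraphFiltRep}. The key point, and the only real subtlety, is that $\Lmat_{\timeIndx}$ and its derivative $\Bmat_{:,m}\Bmat_{:,m}^T$ do \emph{not} commute in general, so one cannot write $\partial \Lmat_{\timeIndx}^p/\partial[\xvec_{\timeIndx}]_m = p\,\Lmat_{\timeIndx}^{p-1}(\Bmat_{:,m}\Bmat_{:,m}^T)$. Instead, applying the product rule to the $p$-fold product $\Lmat_{\timeIndx}\cdots\Lmat_{\timeIndx}$ and letting the derivative act on each factor in turn gives
\begin{equation}
\frac{\partial \Lmat_{\timeIndx}^p}{\partial [\xvec_{\timeIndx}]_m} = \sum_{k=0}^{p-1} \Lmat_{\timeIndx}^k\,(\Bmat_{:,m}\Bmat_{:,m}^T)\,\Lmat_{\timeIndx}^{p-k-1},
\end{equation}
which I would establish by a short induction on $p$ (the base case $p=1$ is immediate, and the inductive step peels off one factor of $\Lmat_{\timeIndx}$, combining the derivative of the new factor with the inductive hypothesis).

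Finally, substituting into the polynomial filter and using linearity of differentiation gives $\partial \hvec_{\timeIndx}/\partial[\xvec_{\timeIndx}]_m = \sum_{p=0}^P a_p \big(\partial \Lmat_{\timeIndx}^p/\partial[\xvec_{\timeIndx}]_m\big)\qvec_{\timeIndx}$. The $p=0$ term contributes nothing because $\Lmat_{\timeIndx}^0 = \Imat$ is constant, which is exactly why the outer sum in \eqref{jacobain_poly} starts at $p=1$; combining this with the monomial derivative above reproduces the claimed expression, recalling that the $m$th column of $\nabla_{\xvec_{\timeIndx}}\hvec_{\timeIndx}(\xvec_{\timeIndx})$ is precisely $\partial\hvec_{\timeIndx}/\partial[\xvec_{\timeIndx}]_m$. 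The main obstacle is purely the bookkeeping around non-commutativity in the matrix-power derivative; once that identity is in hand, the rest is a direct substitution.
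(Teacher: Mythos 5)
Your proof is correct and follows essentially the same route as the paper's: both reduce to computing $\partial \Lmat_{\timeIndx}/\partial [\xvec_{\timeIndx}]_m = \Bmat_{:,m}\Bmat_{:,m}^T$ (you via the rank-one expansion of $\Bmat\diag(\xvec_{\timeIndx})\Bmat^T$, the paper via $\Bmat\diag(\uvec_m)\Bmat^T$, which is the same identity) and then apply the non-commutative product rule for matrix powers before summing by linearity. The only cosmetic difference is that you establish the power-rule identity by induction, whereas the paper cites it from a standard reference.
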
  
\begin{proof} 
 The proof appears in Appendix \ref{app;jacobian_calc}.
\end{proof}

Direct computation of the Jacobian matrix in \eqref{jacobain_poly} is computationally intensive due to repeated matrix multiplications involving powers of the Laplacian matrix. Specifically,  computing the vector ${\Lmat}_{\timeIndx} \qvec_{\timeIndx}$ requires $2N^2-N$ operations (multiplications and additions). Consequently, evaluating each term of the form ${\Lmat}_{\timeIndx}^k (\Bmat_{:,m} \Bmat_{:,m}^T) {\Lmat}_{\timeIndx}^{p-k-1} \qvec_{\timeIndx}$ 
by expressing the term as a scalar-vector product, $
(\Bmat_{:,m}^T \Lmat_{\timeIndx}^{p-k-1} \qvec_{\timeIndx} ) \cdot ( \Lmat_{\timeIndx}^{k} \Bmat_{:,m} )$, 
  requires approximately $2(p-1)N^2$ operations. 
For each column $m$, the inner sum in \eqref{jacobain_poly} contains $p$ terms,
leading to a total of $2(p^2-p)N^2$ operations per column.  
Then, summing over all degrees $p = 1$ to $P$ (outer sum in \eqref{jacobain_poly}) results in 
\beqna  
\sum_{p=1}^P 2(p^2-p) N^2 = \frac{2P(P+1)(P-1)}{3} N^2 
\eeqna  
operations,
where we used formulas for the sum of an arithmetic sequence and for the sum of squares of the first \(P\) integers \cite{pyramid_formula}.
Since the Jacobian has $N(N-1)/2$ columns (one for each possible edge), the overall computational complexity of the naive approach is $O(P^3N^4)$, which becomes computationally prohibitive for a large filter order $P$ and graph size $N$.

{\bf Efficient Computation via Dynamic Programming:}
To reduce the computational complexity of Jacobian evaluation, we propose an efficient dynamic-programming method that reuses the intermediate matrix-vector products shared across polynomial terms by reorganization of \eqref{jacobain_poly}, as follows.
\begin{Lemma}\label{Lemma_jacobian_reindex}
    The $m$th column of the Jacobian matrix of $ \hvec_{\timeIndx}(\xvec_{\timeIndx})$ 
    in \eqref{jacobain_poly} can be written as 
    \begin{equation}  
\label{jacobain_poly_reindexed}  
[\nabla_{\xvec_{\timeIndx}}\hvec_{\timeIndx}(\xvec_{\timeIndx})]_{:,m}=
\sum\nolimits_{p=0}^{P-1} 
\Dmat_p
(\Bmat_{:,m} \Bmat_{:,m}^T)
\cvec_p,
\end{equation}
where
$\cvec_p\define {\Lmat}_{\timeIndx}^{p} 
\qvec_{\timeIndx}$ and $\Dmat_p\define \sum_{r=0}^{P-1-p} 
a_{p+r+1} 
{\Lmat}_{\timeIndx}^{r}$
 for each $p=0,\ldots,P-1$.
\end{Lemma}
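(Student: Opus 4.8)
The plan is to obtain \eqref{jacobain_poly_reindexed} directly from the expression in Lemma~\ref{lemma_jacobian} by a change of summation indices followed by an exchange in the order of summation; no new analytic content is required, only careful bookkeeping of the finite double sum. Throughout I would drop the time subscript $\timeIndx$ for readability, writing $\Lmat$ for $\Lmat_\timeIndx$ and recalling that $\cvec_p\define\Lmat^p\qvec_\timeIndx$.

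First I would rewrite each summand of \eqref{jacobain_poly} so that the right-hand factor is recognized as one of the vectors $\cvec_q$. Since $\Lmat^{p-k-1}\qvec_\timeIndx=\cvec_{p-k-1}$, I introduce the inner index $q\define p-k-1$, which ranges over $\{0,\ldots,p-1\}$ as $k$ ranges over $\{p-1,\ldots,0\}$ and which turns the left-hand factor $\Lmat^k$ into $\Lmat^{p-q-1}$. This recasts the Jacobian column as $\sum_{p=1}^P a_p\sum_{q=0}^{p-1}\Lmat^{p-q-1}(\Bmat_{:,m}\Bmat_{:,m}^T)\cvec_q$.

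Next I would exchange the order of summation. The triangular index region $\{1\le p\le P,\ 0\le q\le p-1\}$ coincides with $\{0\le q\le P-1,\ q+1\le p\le P\}$, so summing over $q$ on the outside and $p$ on the inside, and factoring out the $p$-independent term $(\Bmat_{:,m}\Bmat_{:,m}^T)\cvec_q$, gives $\sum_{q=0}^{P-1}\bigl(\sum_{p=q+1}^P a_p\Lmat^{p-q-1}\bigr)(\Bmat_{:,m}\Bmat_{:,m}^T)\cvec_q$. A final substitution $r\define p-q-1$ in the bracketed inner sum, with $r$ ranging over $\{0,\ldots,P-q-1\}$ and $a_p=a_{q+r+1}$, identifies the bracket as precisely $\Dmat_q$ from the lemma statement; renaming $q$ as $p$ yields \eqref{jacobain_poly_reindexed}.

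The only real pitfall is keeping the summation limits consistent through the two index substitutions and the swap—in particular, verifying that the triangular region transforms correctly and that the shifted coefficient index $q+r+1$ never exceeds $P$. Because every manipulation is an exact rearrangement of a finite sum, matrix noncommutativity is not an issue: the factors $\Lmat^{p-q-1}$ and $\cvec_q=\Lmat^q\qvec_\timeIndx$ remain on opposite sides of $\Bmat_{:,m}\Bmat_{:,m}^T$ throughout, so no reordering of matrix products is ever performed.
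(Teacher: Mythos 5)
Your proof is correct and follows essentially the same route as the paper's: a change of the inner summation index (your $q=p-k-1$ is exactly the paper's $j$), an exchange of the order of summation over the triangular region, and a final shift $r=p-q-1$ identifying the inner sum as $\Dmat_q$. The only cosmetic difference is that the paper also shifts the outer index to $p'=p-1$ before swapping, which your bookkeeping absorbs into the limits $q+1\le p\le P$.
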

\begin{proof}
    The proof appears in Appendix \ref{app;jacobian_reindex}. 
\end{proof}

Based on Lemma~\ref{Lemma_jacobian_reindex},  the Jacobian matrix in \eqref{jacobain_poly_reindexed} can be
efficiently computed using the recursive updates of  $\cvec_p$ and $\Dmat_p$:
\begin{align}
&\cvec_p={\Lmat}_{\timeIndx}\cvec_{p-1},
&\cvec_0=\qvec_{\timeIndx},
\\
&\Dmat_{p}=a_{p+1}\Imat+{\Lmat}_{\timeIndx}\Dmat_{p+1}, & \Dmat_{P-1}=a_P\Imat.
\end{align}
These recursive formulations avoid repeated matrix powers and align with Horner's scheme for polynomial evaluation, which is known to be computationally optimal~\cite{Pan66}.

In addition, we note that for the incidence matrix $\Bmat$ defined in \eqref{Bmat_def}, each column has exactly two nonzero entries corresponding to the relevant edge. Therefore, the $p$-component in the sum in \eqref{jacobain_poly_reindexed} can be efficiently calculated as scalar-vector  multiplication 
\[([\cvec_p]_n-[\cvec_p]_k)\cdot([\Dmat_p]_{:,n}-[\Dmat_p]_{:,k}),\] where $n,k$ are the nonzero elements in $\Bmat_{:,m}$ associated with the edge $e_m=(n,k)$, based on the fixed mapping of nodes to edges, $\psi$. 
The final algorithm is summarized in Algorithm~\ref{alg:DP_Jacobian}. 

\begingroup
\setlength{\textfloatsep}{2pt}
\begin{algorithm}
\caption{Dynamic programming for efficient Jacobian computation}
\label{alg:DP_Jacobian}

\textbf{Input:} 
${\xvec}_{\timeIndx} $, mapping $\psi$, 
${\Lmat}_{\timeIndx}$,$\qvec_{\timeIndx} $,  $\{a_p\}_{p=0}^P.$
\begin{enumerate}[leftmargin=15pt]
\item \textbf{Precompute $\{\cvec_p,\Dmat_p\}_{p=0}^{P-1}$:}\begin{enumerate}
    \item  $\cvec_0 \gets \qvec_{\timeIndx}$ 
    \item $\Dmat_{P-1} \gets a_{P}\Imat$ 
    \item \textbf{For $p = 1$ to $P-1$:}\begin{enumerate}
    \item  $\cvec_p \gets {\Lmat}_{\timeIndx} \cvec_{p-1}$
    \item  $\Dmat_{P-1-p} \gets a_{P-p}\Imat+{\Lmat}_{\timeIndx} \Dmat_{P-p}$
            \end{enumerate}
\end{enumerate}

\item \textbf{Compute each column of $\nabla_{\xvec_{\timeIndx}}\hvec_{\timeIndx}(\xvec_{\timeIndx})$:}\\
 \textbf{For $m = 1$ to $\maxEdges$:}\\
$(n,k)\gets\psi^{-1}(m)$\\
    $[\nabla_{\xvec_{\timeIndx}}\hvec_{\timeIndx}(\xvec_{\timeIndx})]_{:,m}=([\cvec_p]_{n}-[\cvec_p]_{k})\cdot([\Dmat_p]_{:,n}-[\Dmat_p]_{:,k})$

\end{enumerate}
\textbf{Output} $\nabla_{\xvec_{\timeIndx}}\hvec_{\timeIndx}(\xvec_{\timeIndx})$
\end{algorithm}
\endgroup
The computational complexity of Step 1) in Algorithm \ref{alg:DP_Jacobian} is 
$O(PN^3)$, since it involves approximately $2(P-1)N^2+2(P-2)N^3$ operations. 
Step 2) performs $\maxEdges$ iterations of $O(PN)$ 
operations, 
resulting in a complexity of $O(PN^3)$. 
Overall, the total computational complexity is $O(P N^3)$, which is a significant improvement over the $O(P^3 N^4)$ cost of the direct  Jacobian evaluation.

 \subsubsection{Algorithm Summary}
The proposed \ac{gsp}-\ac{ekf} algorithm extends the classical \ac{ekf} to track sparse, time-varying graph topologies by incorporating sparsity priors into the state update, and leveraging the structure of polynomial graph filters. 
To reduce complexity, the Jacobian matrix ${\Hmat}_{\timeIndx}(\hat{\xvec}_{\timeIndx|\timeIndx-1})$ is efficiently computed using the recursive scheme in Algorithm~\ref{alg:DP_Jacobian}, derived from the polynomial filter representation in \eqref{jacobain_poly}.
The state update is formulated as a convex sparsity-regularized problem \eqref{ekf_update_l1_pgd} and solved via the \ac{pgd} algorithm 
(Algorithm \ref{alg:ISTA-EKF}).  Within the \ac{ekf} loop, the predicted covariance from \eqref{eqn:state_covariance_computaion} is propagated forward as an approximation of the posterior covariance, while the predicted $\hat{\xvec}_{\timeIndx}$ in \eqref{eqn:EKFUpdate} serves as the initialization   $\xvec_{\timeIndx}^{(0)}$  for the iterative update. The full procedure is summarized in Algorithm~\ref{alg:GSP-EKF}.
\vspace{-0.25cm}
\begingroup
\setlength{\textfloatsep}{5pt}
\begin{algorithm}
    \caption{\ac{gsp}-\ac{ekf} at time $\timeIndx$}
    \label{alg:GSP-EKF} 
    \SetKwInOut{Input}{Input}
    \SetKw{KwRet}{Return}
    \Input{
     \begin{enumerate}[leftmargin=15pt]
    \item Previous-iteration estimator and its covariance: $\hat{\xvec}_{\timeIndx-1}$, $\hat{\bf{\Sigma}}_{\timeIndx-1}$.
    \item Functions $\fvec_{\timeIndx}(\cdot)$ and $h(\Bmat\operatorname{diag}(\cdot)\Bmat^T)$.
    \item Measurements $\yvec_{\timeIndx},\qvec_{\timeIndx}$.
    \item  \ac{ssm} parameters $\Qmat$ and $\Rmat$.
     \item \ac{ista} parameters $(M,\mu,\{\rho^{(m)}\}_{m=1}^M)$.
\end{enumerate}}  
        {%
            \underline{Prediction step:}
            \begin{align}
            \label{prediction_step}
                \hat{\xvec}_{\timeIndx|\timeIndx-1} &=  \fvec_{\timeIndx}(\hat{\xvec}_{\timeIndx-1})\\
                \label{prediction_obs_step1}
                \hat{\yvec}_{\timeIndx|\timeIndx-1} &=  \hvec_{\timeIndx}(\hat{\xvec}_{\timeIndx|\timeIndx-1})\end{align}
            Compute  $\hat{\Hmat}_{\timeIndx}=\nabla_{\xvec_{\timeIndx}}\hvec_{\timeIndx}(\hat{\xvec}_{\timeIndx|\timeIndx-1})$ via  Algorithm \ref{alg:DP_Jacobian};
            \begin{align}\label{prediction_step2}\hat{\bf{\Sigma}}_{\timeIndx|\timeIndx-1} &= \hat{\Fmat}_{\timeIndx}\hat{\bf{\Sigma}}_{\timeIndx-1}\hat{\Fmat}_{\timeIndx}^T + {\Qmat} \\
            \label{prediction_step3}
{\hat\Smat}_{\timeIndx|\timeIndx-1}&=\hat{\Hmat}_{\timeIndx}\cdot{\hat{\bf{\Sigma}}}_{\timeIndx|\timeIndx-1}\cdot\hat{\Hmat}_{\timeIndx}^T+\Rmat
            \end{align}
            \underline{Update step:}
              \begin{align}
            \label{kg_alg}
            \Kgain_{\timeIndx} &= \hat{\bf{\Sigma}}_{\timeIndx|\timeIndx-1}\cdot\hat{\Hmat}_{\timeIndx}^T\cdot \hat{\Smat}_{\timeIndx|\timeIndx-1}^{-1}\end{align}
            Solve \eqref{ekf_update_l1_pgd} for $\hat{\xvec}_{\timeIndx}$, by using Algorithm \ref{alg:ISTA-EKF};
            \begin{align}
            \label{update_step_freq3}
                \hat{\bf{\Sigma}}_{\timeIndx} &=(\Imat-\Kgain_{\timeIndx}\cdot\hat{\Hmat}_{\timeIndx}){\bf{\Sigma}}_{\timeIndx|\timeIndx-1}(\Imat-\Kgain_{\timeIndx}\cdot\hat{\Hmat}_{\timeIndx})^T\nonumber\\&\qquad + \Kgain_{\timeIndx}\cdot\Rmat\cdot\Kgain_{\timeIndx}^T
                \end{align}       
            }
        \textbf{Output} {$\hat{\xvec}_{\timeIndx}$}
\end{algorithm}
\endgroup

\vspace{-0.75cm}
\subsection{Special Cases}
\label{ssec:SpecialCases}
In this subsection, we examine two special cases: $(i)$  the linear measurement model from Subsection \ref{sssec:LinearExample}; and $(ii)$ known graph support.

\subsubsection{Linear Case}
The following claim shows that under specific initialization and under the linear \ac{ssm} described in Example \ref{sssec:LinearExample}, the proposed sparsity-aware  \ac{gsp}-\ac{ekf} method reduces to our earlier method  in~\cite{dabush2024Routtenberg_Kalman}. However, the current framework generalizes to nonlinear models and general initialization strategies.
\begin{Claim}\label{claim;starting_point}
Let the initial state,  $\xvec_{\timeIndx}^{(0)}$, of \ac{ista}-Update in Algorithm~\ref{alg:ISTA-EKF}, be set to the unregularized \ac{ekf} estimator, which implies  
$\nabla_{\xvec}\phi_l(\hat{\xvec}_{\timeIndx}^{(0)})=\zerovec$. Thus, the first iteration of \eqref{proxy_solution} is simplified to 
\begin{equation}\label{proxy_solution_1_iter}
\hat{\xvec}^{(1)}_{\timeIndx}=
T_{\mu\rho^{(0)}} \left(\hat{\xvec}_{\timeIndx}^{(0)}\right). 
\end{equation}
In particular, for a linear \ac{ssm}, \eqref{proxy_solution_1_iter} coincides with the second approach of the graph tracking \ac{kf} (GT-KF 2)  in \cite{dabush2024Routtenberg_Kalman}.
\end{Claim}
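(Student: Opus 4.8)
The plan is to prove the two assertions in turn: first the algebraic collapse of the initial \ac{pgd} step, and then its identification with GT-KF~2 under the linear model.

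First I would establish the stationarity claim $\nabla_{\xvec}\phi_l(\hat{\xvec}_{\timeIndx}^{(0)})=\zerovec$. By \eqref{EKF_update}, the unregularized \ac{ekf} estimator is by definition the minimizer of the smooth objective $\phi_l$, and its closed form is given in \eqref{eqn:updates_computation}. Since $\phi_l$ is a differentiable convex quadratic — its gradient is the affine map in \eqref{gradient_phi1}, with Hessian $2(\hat{\Hmat}_{\timeIndx}^T\Rmat^{-1}\hat{\Hmat}_{\timeIndx}+\hat{\bf{\Sigma}}_{\timeIndx|\timeIndx-1}^{-1})\succ\zerovec$ — the unique minimizer is characterized by the first-order optimality condition $\nabla_{\xvec}\phi_l=\zerovec$. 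Setting $\hat{\xvec}_{\timeIndx}^{(0)}$ equal to this minimizer therefore forces $\nabla_{\xvec}\phi_l(\hat{\xvec}_{\timeIndx}^{(0)})=\zerovec$. Substituting this into the \ac{pgd} update \eqref{proxy_solution} with $m=0$ collapses the gradient step, leaving $\hat{\xvec}^{(1)}_{\timeIndx}=T_{\mu\rho^{(0)}}(\hat{\xvec}^{(0)}_{\timeIndx}-\rho^{(0)}\zerovec)=T_{\mu\rho^{(0)}}(\hat{\xvec}^{(0)}_{\timeIndx})$, which is exactly \eqref{proxy_solution_1_iter}.

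Next I would specialize to the linear \ac{ssm} of \eqref{eq_linear_model}, recast as \eqref{eq_model2} via Claim~\ref{claim1}. The crucial observation is that when $\hvec_{\timeIndx}$ is affine, its Jacobian $\hat{\Hmat}_{\timeIndx}=\Hmat_{\timeIndx}=a_1\Bmat\diag(\Bmat^T\qvec_{\timeIndx})$ is a constant, state-independent matrix, so the \ac{ekf} linearization is exact. Consequently the predicted and updated covariances and the Kalman gain produced by the \ac{ekf} recursions coincide with those of the ordinary linear \ac{kf}, and the unregularized estimator $\hat{\xvec}_{\timeIndx}^{(0)}$ equals the standard \ac{kf} update for \eqref{eq_model2}. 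It then remains to match \eqref{proxy_solution_1_iter}, i.e., a single soft-thresholding of the \ac{kf} estimate with threshold $\mu\rho^{(0)}$, to the definition of GT-KF~2 in \cite{dabush2024Routtenberg_Kalman}, which is precisely the linear \ac{kf} update followed by one soft-thresholding step.

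The main obstacle is the bookkeeping in this last step rather than any deep argument: one must verify carefully that every quantity in the \ac{ekf} recursion (the gain, the predicted covariance propagated through the filter, and the mean update) reduces to its linear-\ac{kf} counterpart once the Jacobian is exact, and then align the notation and the thresholding level of \eqref{proxy_solution_1_iter} with those in the GT-KF~2 definition. The convexity and stationarity part is routine; the identification hinges on confirming that the reference's GT-KF~2 is exactly ``\ac{kf} update, then soft-threshold'' with a matching threshold.
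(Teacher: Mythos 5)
Your proof is correct, but it takes a genuinely different route from the paper's. For the key step $\nabla_{\xvec}\phi_l(\hat{\xvec}_{\timeIndx}^{(0)})=\zerovec$, you invoke first-order optimality: since \eqref{EKF_update} defines the unregularized EKF estimate as the minimizer of the smooth convex quadratic $\phi_l$ (whose Hessian $2(\hat{\Hmat}_{\timeIndx}^T\Rmat^{-1}\hat{\Hmat}_{\timeIndx}+\hat{\bf{\Sigma}}_{\timeIndx|\timeIndx-1}^{-1})$ is positive definite under the paper's nonsingularity assumptions), the gradient must vanish there, and the PGD step collapses. The paper instead proves this by brute force: it substitutes the closed-form update \eqref{eqn:EKFUpdate}--\eqref{eqn:kalman_gain_computaion} into the gradient expression \eqref{gradient_phi1} and uses the Woodbury-type identity \eqref{eq:158} together with the definition of $\hat{\Smat}_{\timeIndx|\timeIndx-1}$ to show the resulting matrix coefficient of the innovation is zero. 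The two arguments differ in what they presuppose: your shortcut leans on the asserted equivalence (cited from Welling) between the closed-form Kalman update \eqref{eqn:EKFUpdate} and the MAP minimizer \eqref{EKF_update}, whereas the paper's algebra is self-contained and in effect re-derives that equivalence by verifying stationarity of the closed form directly. If the initializer in Algorithm~\ref{alg:ISTA-EKF} is read as the closed-form expression rather than the abstract argmin, your argument silently needs that cited equivalence, while the paper's does not---this is the only (minor) dependency gap, and it is covered by the paper's own statement in Subsection~\ref{ssec:EKF}. For the linear-SSM identification with GT-KF~2, both you and the paper argue at the same level (a comparison with the cited algorithm); your added observation that a state-independent Jacobian makes the EKF recursion exactly a linear KF is the right justification and slightly more explicit than the paper's.
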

\begin{proof}
    The proof appears in Appendix \ref{app;proof_initialization}. 
\end{proof}

\subsubsection{Known Support and Unknown Weights}\label{subsub;oracle}
In the following, we consider a scenario where the graph connectivity (i.e., support) is known, while the edge weights evolve over time and must be estimated.
Representative examples include power grids with fixed topology and varying line impedances; 
traffic networks with fixed roads and time-varying congestion levels; and social networks with stable relationships between individuals and evolving interaction strengths. In such scenarios, the estimation problem reduces to tracking time-varying edge weights over a fixed support.

The known graph support can be enforced by using a masking matrix $\Mmat_{\timeIndx}\in\mathbb{R}^{\maxEdges\times\maxEdges}$, which preserves the rows corresponding to the active edges at time \( \timeIndx \). That is, \( [\Mmat_\timeIndx]_{ii} = 1 \) if \( i \in \xi_{\timeIndx} \), and \( [\Mmat_{\timeIndx}]_{ii} = 0 \) otherwise. This results in the following modification of \eqref{eq_state_evolution} as follows:
\begin{equation}
\label{eq_state_evolution_known_support}
\xvec_{{\timeIndx}} = \Mmat_{\timeIndx}\cdot\big(\fvec_{\timeIndx}(\xvec_{{\timeIndx}-1}) + \evec_{{\timeIndx}}\big),
\end{equation}
 where we assume that $\evec_{\timeIndx} \sim \mathcal{N}(\bar{\evec}_{\timeIndx}, \Qmat)$, where $\bar{\evec}_{\timeIndx}$ accounts for potential jumps in edge weights due to newly established connections.
For the state evolution model in \eqref{eq_state_evolution_known_support}, the prediction step can be written as:
\begin{equation}
\label{EKF_prediction_known_connections2}
\hat{\xvec}_{\timeIndx | \timeIndx-1} = \Mmat_{\timeIndx}\cdot\big(
\fvec_{\timeIndx}(\hat{\xvec}_{\timeIndx-1})+\bar{\evec}_{\timeIndx}\big).
\end{equation}
Accordingly, the state evolution noise covariance and the Jacobian are given by $\Mmat_{\timeIndx}\Qmat\Mmat_{\timeIndx}$ and $\Mmat_{\timeIndx}\Fmat_{\timeIndx}$, respectively, where $\Qmat$ and $\Fmat_{\timeIndx}$ correspond to the unmasked model. This yields the prediction covariance:
\begin{equation}
\label{eqn:state_covariance_computaion_masked}
{\hat{\bf{\Sigma}}}_{\timeIndx|\timeIndx-1}=\Mmat_{\timeIndx}\hat\Fmat_{\timeIndx}\hat{\bf{\Sigma}}_{\timeIndx-1}\hat\Fmat_{\timeIndx}^T\Mmat_{\timeIndx}+\Mmat_{\timeIndx}\Qmat\Mmat_{\timeIndx}.
\end{equation}

The known graph support can also be enforced as a hard constraint into the update step of the \ac{gsp}-\ac{ekf} from \eqref{EKF_update} by reformulating it as a constrained optimization problem:
\beqna
\begin{array}{lr}
\hat{\xvec}_{\timeIndx}=\arg\min\limits_{\xvec\in \mathbb{R}^N}
\phi_l(\xvec)
\\
{\text{subject to }}\xvec_{\bar{\xi}_{\timeIndx}}=\zerovec,
\end{array} 
\label{EKF_update_known_connections}
\eeqna
where $\phi_l(\xvec)$ is defined in 
\eqref{eq:phi_def}, $\bar{\xi}_{\timeIndx} \define \xi \setminus \xi_{\timeIndx}$ denotes the set of inactive edges at time $\timeIndx$, and 
${\hat{\bf{\Sigma}}}_{\timeIndx|\timeIndx-1}^{-1}$ is replaced by its pseudo-inverse ${\hat{\bf{\Sigma}}}_{\timeIndx|\timeIndx-1}^{\dagger}$ to account for the reduced-rank structure corresponding to the active edges. 
The closed-form solution to \eqref{EKF_update_known_connections} yields the updated estimator:  
\begin{equation}
\label{EKF_update_known_connections2}
\left\{
\begin{array}{l}
\left[\hat{\xvec}_{\timeIndx} \right]_{\xi_{\timeIndx}} = \left[\hat{\xvec}_{\timeIndx | \timeIndx-1} +\Kgain_{\timeIndx}\cdot \Delta \yvec_{\timeIndx}\right]_{\xi_{\timeIndx}},\\
\left[\hat{\xvec}_{\timeIndx} \right]_{\bar{\xi}_{\timeIndx}} = 0.
\end{array}
\right.
\end{equation}
where $\Kgain_{\timeIndx}$ is the \ac{kg}, which is computed similarly to \eqref{eqn:kalman_gain_computaion} but incorporates the masked covariance matrix from \eqref{eqn:state_covariance_computaion_masked}  for ${\hat{\bf{\Sigma}}}_{\timeIndx|\timeIndx-1}$.

\subsection{Discussion}
\label{ssec:Discussion}
This work introduces a novel sparse nonlinear \ac{ssm} framework for joint tracking of time-varying edge sets and edge weights over graphs.
Unlike previous approaches, we model nonlinear topology-dependent dynamics via graph-filter processes, which leads to the development of \ac{gsp}-based \ac{ekf} for sparse state estimation. 
In this subsection, we discuss the computational complexity and future directions. 

\subsubsection{Computational Complexity}
In the following, we analyze the computational complexity of a single iteration of the  \ac{gsp}-\ac{ekf}.  The prediction step requires evaluating the state transition function \(\fvec_{\timeIndx}(\cdot)\)  and its Jacobian $\Fmat_{\timeIndx}$, both standard in \ac{ekf} settings. 
 In addition,  constructing the Laplacian matrix via edge-wise updates is linear in the number of edges, 
 which leads to an $O(N^2)$ computation for dense graphs.
Evaluating the measurement function as written in \eqref{eqn:GraphFiltRep} requires \(PN^2\) multiplications, computing the covariance \({\bf{\Sigma}}_{\timeIndx|\timeIndx-1}\) involves \(2(\maxEdges)^3\) multiplications, and calculating \(\Smat_{\timeIndx|\timeIndx-1}\) requires \(N(\maxEdges)^2+N^2(\maxEdges)\) multiplications. Additionally, computing the Jacobian matrix using Algorithm~\ref{alg:DP_Jacobian} incurs a cost of \(PN^3\) multiplications.

In the update step, the computation of the \ac{kg} from \eqref{kg_alg} requires \(N(\maxEdges)^2+N^2(\maxEdges)\) 
multiplications. 
The main bottleneck is
the calculation of $\hat{\xvec}_{\timeIndx}$ via Algorithm~\ref{alg:ISTA-EKF}  the inversion of $\hat{\bf{\Sigma}}_{\timeIndx|\timeIndx-1}$ in~\eqref{gradient_phi1}, with a worst-case complexity of \(O((\maxEdges)^3) \). However, if the algorithm is initialized as described in Claim~\ref{claim;starting_point} and only one ISTA iteration is used, the computation involves only \( N\maxEdges \) multiplications.
The covariance matrix $\hat{\bf{\Sigma}}_{\timeIndx}$ in \eqref{update_step_freq3} adds a further \(2N(\maxEdges)^2 + (\maxEdges)^3\) multiplications. Our simulations indicate that a single ISTA iteration is often sufficient to achieve good performance.

Overall, the dominant computational cost arises from covariance updates and the inversion of $\hat{\bf{\Sigma}}_{\timeIndx|\timeIndx-1}$, as in standard \ac{ekf} settings.
This complexity can be notably reduced using existing methods for \ac{ekf}, e.g., low-rank and diagonal approximations~\cite{buchnik2023gspkalmannet,chang2022diagonal}, as well as parallelization strategies for large-scale graphs.
In terms of our \ac{gsp} setting,  our Jacobian computation reduced the cost from $O(P^3N^4)$ to $O(PN^3)$ using Algorithm~\ref{alg:DP_Jacobian}.



 \subsubsection{Future Research Directions}
 Our proposed framework of casting dynamic graphs as \acp{ssm} gives rise to multiple possible extensions. 
 From a modeling perspective, an important direction is extending the model to handle partial measurements available from the network, allowing for more realistic data acquisition.
 Furthermore, alternative filtering techniques, such as particle filters~\cite{arulampalam2002tutorial} or variational inference~\cite{smidl2008variational} methods, could improve performance in nonlinear and non-Gaussian settings, and possibly combined with machine learning methods for balancing the excessive complexity~\cite{nuri2024learning}. Machine learning tools can potentially be integrated with our proposed sparsity-aware \ac{gsp}-\ac{ekf},  adapting to scenarios with unknown measurement functions or noise statistics using the KalmanNet methodology \cite{revach2022kalmannet} or via alternative forms of learning-aided \acp{kf}~\cite{shlezinger2024ai}. 
Finally, developing a distributed implementation, possibly leveraging tools from distributed tracking~\cite{cattivelli2010diffusion}, could enable scalability for large-scale networks, making the method more practical for decentralized applications.
\section{Numerical Study}
\label{ssec:simulations}
In this section, we numerically evaluate the proposed \ac{gsp}-\ac{ekf} algorithm and compare it with existing approaches. In Subsection~\ref{ssec:setup}, we introduce the simulation setup, while in Subsections~\ref{ssec;sim_linear}-\ref{ssec;sim_nonlinear} we present detailed results for linear and nonlinear cases, respectively.

\subsection{Experimental Setup}
\label{ssec:setup}
In our study, we compare the following estimators\footnote{The code is available  at \url{https://github.com/lital-dgold/PolyEKF.git}}: 
\begin{enumerate}
\item \textbf{\ac{gsp}-\ac{ekf}} (Algorithm \ref{alg:GSP-EKF}) with $M=1$ and $\mu\rho^{(0)}=0.25$. 
\item  \textbf{Change-det} method of \cite{10333427}, where the adjacency matrix is replaced by the Laplacian matrix (that is, the optimization parameter is $\xvec$, which composes the Laplacian by \eqref{L2B_time_index} instead of the adjacency). 
The additional parameters of this method, including the time window length \( W \) and the regularization parameters \( \lambda_1 \) and \( \lambda_2 \) (Alg. 1 in  \cite{10333427}), were hand-tuned to optimize performance for each scenario.

\item \textbf{\ac{ekf}}, i.e., Algorithm \ref{alg:GSP-EKF} where $\hat{\xvec}_{\timeIndx}$ is obtained from \eqref{eqn:EKFUpdate} and \eqref{eqn:kalman_gain_computaion}, and the computation of $\hat{\Hmat}_{\timeIndx}$ is straight forward calculation of \eqref{jacobain_poly} without Algorithm \ref{alg:DP_Jacobian}. At the end of the update step, negative entries of the state vector are set to $0$.
\item \textbf{Oracle-\ac{gsp}-\ac{ekf}} (see Subsection~\ref{subsub;oracle}), which knows the true support of the state $\xvec_\timeIndx$, $\xi_\timeIndx$, (unknown in our setting).
  \end{enumerate}

We consider a sparse graph of \(N\) nodes, initialized with \(3N\) edges chosen uniformly at random, each assigned a unit weight. According to Theorem \ref{observbility}, for a system to be $T$-step observable, one should take  $T>N$ time samples. Therefore, every \(2N\) time steps, either one edge is added with weight sampled from \(\mathcal{N}(1, 0.01)\) (so that, with high probability, the edge weight would be positive) or one existing edge is removed. This setup simulates gradual changes in the network topology over time. 

We use the \ac{ssm} \eqref{eq_model}, where the state transition function is \(\fvec_{\timeIndx}(\xvec) = \xvec\). 
The signals $\{\qvec_k\}_{k\leq\timeIndx}$ were generated from Gaussian distribution $\mathcal{N}(\zerovec, \Imat)$, 
and the noise covariances of $\evec_{\timeIndx}$ and $\vvec_{\timeIndx}$ are $\Qmat_{\timeIndx}=\sigma_{\evec}^2\operatorname{diag}(\onevec_{\xi_{\timeIndx}})$ and $\Rmat=\sigma_{\vvec}^2\Imat$, respectively, with $\sigma_{\evec}$ and $\sigma_{\vvec}$ defined in each simulation.
It is emphasized that the compared estimation algorithms which require knowledge of the \ac{ssm} are given $\Qmat_{\timeIndx}=\sigma_{\evec}^2\Imat$. 
 In addition, all methods are initialized with \(\hat{\xvec}_0 = \onevec\) and \(\hat{\Sigmamat}_0 = 0.25 \cdot \Imat\), except for Oracle-\ac{gsp}-\ac{ekf}, which is initialized with the true state and \(\hat{\Sigmamat}_0 = 0.25 \cdot \operatorname{diag}(\onevec_{\xi_0})\). This allows for a fair comparison under imperfect initialization conditions, with Oracle-\ac{gsp}-\ac{ekf} serving as an (upper) performance benchmark.

 For the observation model, we use the following settings:
 \begin{itemize}
     \item {\em Lin}: A linear model with $ \hvec_{\timeIndx}(\xvec_{\timeIndx}) = \Lmat_{\timeIndx}$;
     \item {\em NL4}: Fourth-order nonlinear model given by $\hvec_{\timeIndx}(\xvec_{\timeIndx})=\Imat+\Lmat_{\timeIndx}+\Lmat_{\timeIndx}^2+0.1\Lmat_{\timeIndx}^3+\Lmat_{\timeIndx}^4$;
     \item {\em NL5}: Fifth-order nonlinear model given by $\hvec_{\timeIndx}(\xvec_{\timeIndx})=\Imat+\Lmat_{\timeIndx}+0.8\Lmat_{\timeIndx}^2+0.6\Lmat_{\timeIndx}^3+0.4\Lmat_{\timeIndx}^4+0.2\Lmat_{\timeIndx}^5$.
     \item {\em NLP}: $P$-order nonlinear model given by \( \hvec_{\timeIndx}(\xvec_{\timeIndx}) = \sum_{p=0}^P \frac{1}{2^p} \Lmat_{\timeIndx}^p \), where \( P \), the polynomial filter order, varies from 1 to \( N - 1 \).
 \end{itemize}

Our main performance measures are 
$(i)$ normalized \ac{mse}, $\frac{1}{\maxEdges}\times$ \ac{mse}, to evaluate the weights estimation; and
$(ii)$ \ac{eier}, defined as
\begin{equation}
    \text{EIER} := \frac{1}{N(N-1)} 
     \times|(\xi_{\timeIndx}\setminus\hat{\xi}_{\timeIndx})\cup(\hat{\xi}_{\timeIndx}\setminus{\xi}_{\timeIndx})|
    \times 100\%,
\end{equation}
which measures the percentage of incorrectly estimated connections. An edge is declared present if \([\xvec_\timeIndx]_n\) exceeds a threshold of $0.1$, to prevent false detections due to small numerical fluctuations or noise in low-magnitude entries. 

\subsection{Linear Model}\label{ssec;sim_linear}
We begin by evaluating the performance of the proposed algorithm in a linear case ({\em Lin}). The graph consists of $N=20$ nodes, and the variances  of $\evec_{\timeIndx}$ and $\vvec_{\timeIndx}$ set to $\sigma_{\evec}=0.01$, and $\sigma_{\vvec}=0.01$, respectively. 
The performance was evaluated for $1,000$ Monte Carlo simulations, each of length $159$ time samples, with the per-entry \ac{mse} and the \ac{eier} vs. time are reported in Fig.~\ref{fig:linear_case_subfigs}. 

We observe Fig.~\ref{fig_mse_vs_t_lin} that error peaks occur at time indices $t = 40$, $80$, and $120$ corresponding to abrupt changes in the graph support $\xi_{\timeIndx}$. As expected, {Oracle-\ac{gsp}-\ac{ekf}}, which has prior knowledge of the support of $\xi_{\timeIndx}$, achieves the lowest \ac{mse} and recovers most rapidly after topology changes or initialization errors compared with the other estimators. It also achieves perfect support recovery (zero \ac{eier} in Fig. \ref{fig_eier_vs_t_lin}) since it uses the true support set, $\xi_\timeIndx$.
{Change-det}, whose window size is set to $W=N$, 
converges faster than the \ac{gsp}-\ac{ekf} during the first $2N$ time samples.  However, for $\timeIndx > 2N$, the \ac{gsp}-\ac{ekf} outperforms it by leveraging an adaptive weighting mechanism that emphasizes recent information,  whereas Change-det method assigns equal weight to all $N$ past samples, despite the evolving state defined in \eqref{linear_model}. The conventional \ac{ekf} performs worst overall, as it does not exploit the sparsity prior. 
Finally, it can be seen that the \ac{gsp}-\ac{ekf} gradually approaches the benchmark performance of the  Oracle-\ac{gsp}-\ac{ekf}, despite the fact that it has no prior knowledge of the support. 



\begin{figure}
    \centering
    \subfloat[Normalized \ac{mse} over time.]{
        \includegraphics[width=0.85\columnwidth]{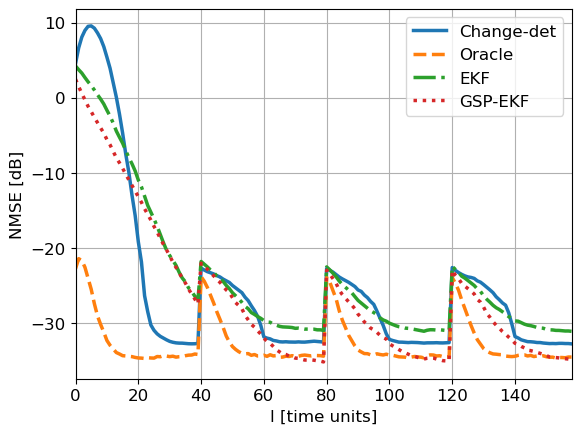}
        \label{fig_mse_vs_t_lin}
    }

    \subfloat[\ac{eier} over time.]{
        \includegraphics[width=0.85\columnwidth]{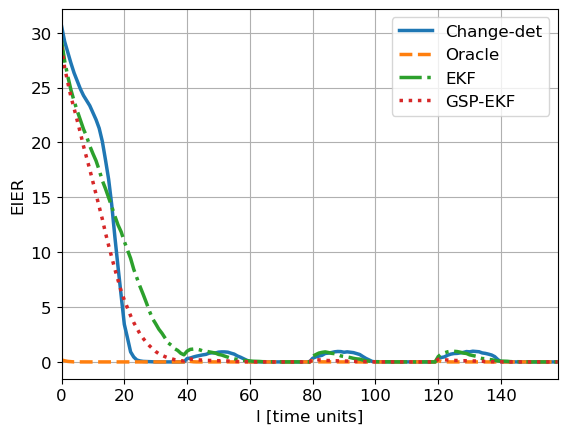}
        \label{fig_eier_vs_t_lin}
    }
    \caption{Performance vs. time, {\em Lin} \ac{ssm}.}
    \label{fig:linear_case_subfigs}
\end{figure}

\subsection{Nonlinear Model}\label{ssec;sim_nonlinear}
As our main focus is on nonlinear settings, we dedicate the main bulk of our numerical study to the evaluation of our \ac{gsp}-\ac{ekf} in nonlinear measurement models across a range of metrics and scenarios. Specifically, we evaluate how its performance evolves with time, with different noise levels, and different graph parameters, as well as its runtime.

\subsubsection{Performance vs. Time}\label{ssec;nonlinear_ver1_vs_time}
We commence by repeating the numerical study reported for the linear case, while modifying the observation model to {\em NL4}.  The resulting performance metrics are reported in Fig.~\ref{fig:nonlinear_case_subfigs}.
%
Oracle-\ac{gsp}-\ac{ekf} consistently achieves the lowest \ac{mse} and converges fastest, owing to its access to the true support $\xi_{\timeIndx}$.
The remaining methods start with high \ac{mse} due to a mismatch in the initial state $\xvec_0$ and the need to learn the full connectivity.
While Change-det initially reduces error more quickly, the \ac{gsp}-\ac{ekf} achieves lower \ac{mse} for $\timeIndx > 2N$ by adaptively weighting past samples, in contrast to the equal-weighting strategy of Change-det.
The \ac{ekf} exhibits a slower decline in error and maintains the highest \ac{mse}, as it does not exploit the sparsity prior.
Similar findings are noted in the \ac{eier} comparison in  Fig. \ref{fig_eier_vs_t_nonlinear}. There, we also note that for $\timeIndx > 4N$, the \ac{gsp}-\ac{ekf} consistently outperforms both the \ac{ekf} and Change-det. This improvement stems from two key factors: $(i)$ Change-det relies on a linear approximation of the topology measurement model, which fails to capture the nonlinear dependencies present in this scenario, and $(ii)$ it requires a batch of recent samples, which delays its adaptation to abrupt changes.

\begin{figure}
    \centering
    \subfloat[Normalized \ac{mse} over time.]{
        \includegraphics[width=0.85\columnwidth]{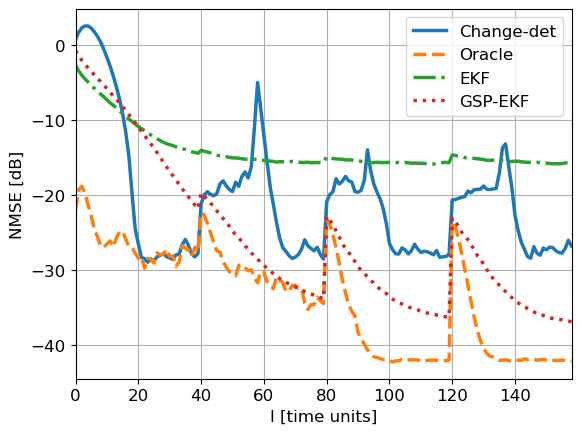}
        \label{fig_mse_vs_t_nonlin}
    }

    \subfloat[\ac{eier} over time.]{
        \includegraphics[width=0.85\columnwidth]{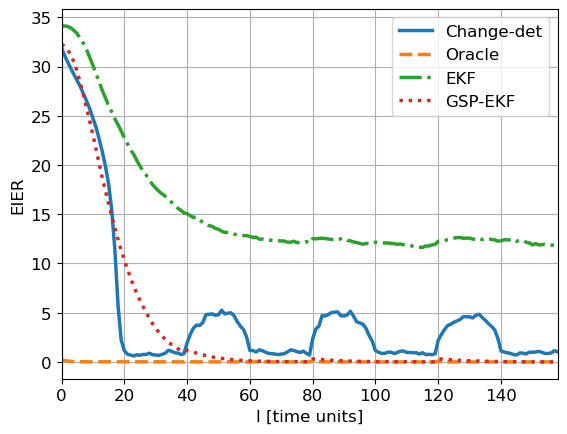}
        \label{fig_eier_vs_t_nonlinear}
    }
    
    \caption{Performance vs. time, {\em NL4} \ac{ssm}.}
    \label{fig:nonlinear_case_subfigs}
\end{figure}


We proceed to the more challenging nonlinearity of {\em NL5}. Here, the graph has $N=10$ nodes, and is initialized with $15$ edges chosen uniformly at random, each assigned a unit weight. The noise variances are $\sigma_{\evec}=0.1$,  and $\sigma_{\vvec}=\sqrt{0.2}$.  For Change-det method, the time window length is set to $W=0.5N$ time samples, and 
$\lambda_1=3.16$ and $\lambda_2=0.316$.
The performance metrics versus time are reported in Fig.~\ref{fig:nonlinear_case2_subfigs}. 
In Fig.~\ref{fig_mse_vs_t_nonlin2}, we note that the \ac{mse} of Change-det plateaus around \(-5\) dB because it is unaware of the noise variance, which is significantly more dominant in this highly nonlinear case. In addition, the process noise is also more influential, so the equal-weighting strategy of Change-det treats the fast-evolving state as constant, which increases the error.  
The gap between the \ac{gsp}-\ac{ekf} and the \ac{ekf} stems from the fact that the \ac{ekf} does not exploit the sparsity prior. However, this difference is less pronounced here because the graph size is smaller, the percentage of connected edges is higher, and the state is less sparse, which eliminates the sparsity advantage of the \ac{gsp}-\ac{ekf}.
Observing the \ac{eier} in Fig.~\ref{fig_eier_vs_t_nonlinear2}, we note that for \( \timeIndx > 2N \), the \ac{gsp}-\ac{ekf} performs best, notably outperforming  Change-det, which struggles to cope with strong noise values, dominant nonlinearities, and rapid variations. The \ac{ekf} handles noise better than Change-det in this scenario but does not use the sparsity of the graph, so its performance in estimating the connectivity is worse in the long run compared to the \ac{gsp}-\ac{ekf}.

\begin{figure}
\centering
\subfloat[Normalized \ac{mse} over time.]{
\includegraphics[width=0.85\columnwidth]{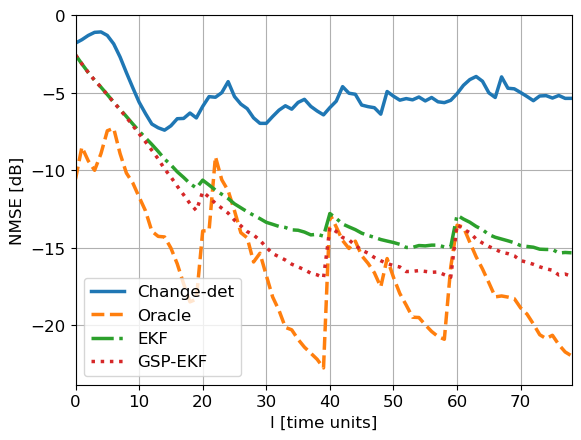}
\label{fig_mse_vs_t_nonlin2}
}

\subfloat[\ac{eier} over time.]{
    \includegraphics[width=0.85\columnwidth]{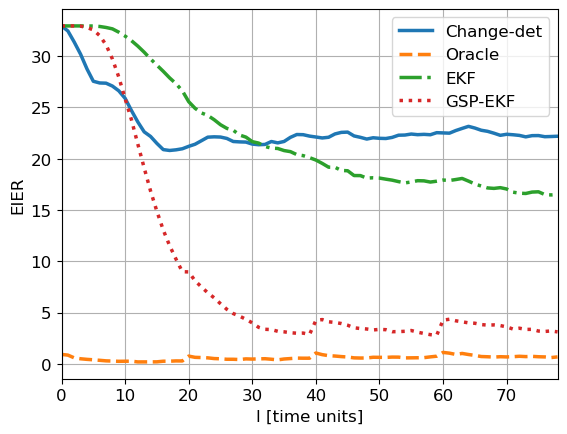}
    \label{fig_eier_vs_t_nonlinear2}
}

\caption{Performance vs. time, {\em NL5} \ac{ssm}.}
\label{fig:nonlinear_case2_subfigs}
\end{figure}

\subsubsection{Performance vs. Noise Level}\label{ssec;performance_vs_snr}
We proceed to evaluating our \ac{gsp}-\ac{ekf} under different noise levels. Unless stated otherwise, we employ the {\em NL5} \ac{ssm} with the configuration used for the study reported in Fig.~\ref{fig:nonlinear_case2_subfigs}, with the exception that the performance is averaged over $100$ graph realizations along $79$ time instances. Here, we set \( \sigma_{\evec} = \sigma_{\vvec} \), and evaluate performance vs noise level in Fig.~\ref{fig:nonlinear_noise_sensitivity}.

In Fig.~\ref{fig_mse_vs_snr_nonlin2}, we observe that for small noise variance values, Change-det, the \ac{gsp}-\ac{ekf}, and the \ac{ekf} reach a plateau, since even under weak noise, it remains challenging to estimate both the edge set and the edge weights. Notably, Change-det suffers from higher \ac{mse} even under low noise levels due to model mismatch. It relies on a linear approximation that is inaccurate in the presence of strong nonlinearity in the data.   %
%
In Fig.~\ref{fig_eier_vs_snr_nonlinear2}, the \ac{eier} of Change-det, the \ac{gsp}-\ac{ekf}, and the \ac{ekf} are shown to increase with growing noise variances, with the \ac{gsp}-\ac{ekf} notably outperforming Change-det and the \ac{ekf}. At high noise levels, the \ac{eier} of Oracle-\ac{gsp}-\ac{ekf} is also observed to be greater than zero. This degradation results from the breakdown of the Gaussian assumption in the edge weights, leading Oracle-\ac{gsp}-\ac{ekf} to occasionally estimate negative weights, which are subsequently set to zero, thereby increasing the \ac{eier}. 

\begin{figure}
\centering
\subfloat[Normalized \ac{mse} vs.  $\sigma_{\evec}$ ($\sigma_{\evec}=\sigma_{\vvec}$).]{
\includegraphics[width=0.85\columnwidth]{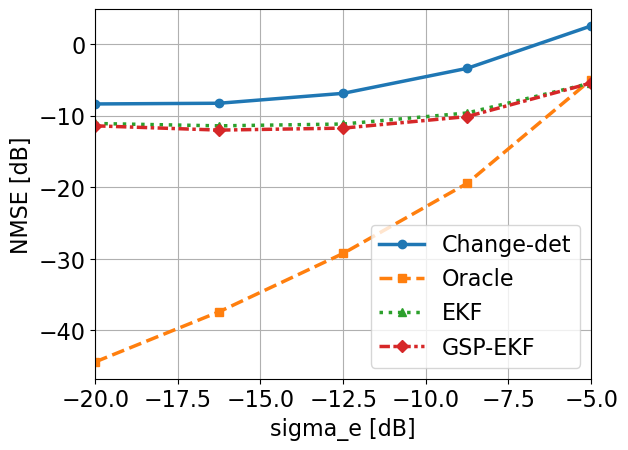}
\label{fig_mse_vs_snr_nonlin2}
}

\subfloat[\ac{eier} vs.  $\sigma_{\evec}$ ($\sigma_{\evec}=\sigma_{\vvec}$).]{
    \includegraphics[width=0.85\columnwidth]{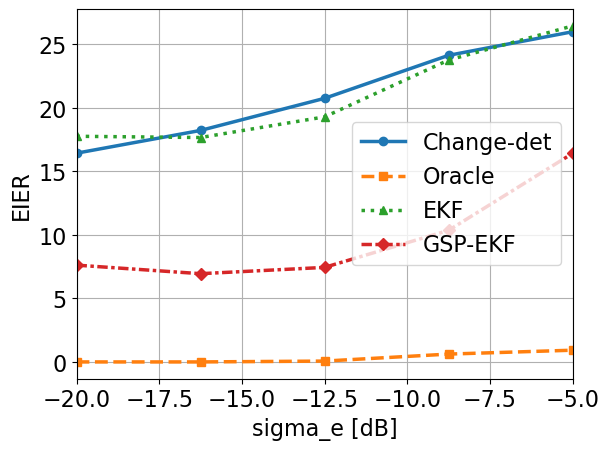}
    \label{fig_eier_vs_snr_nonlinear2}
}
\caption{Performance vs. noise level, {\em NL5} \ac{ssm}.}
\label{fig:nonlinear_noise_sensitivity}
\end{figure}

\subsubsection{Performance vs. Graph Parameters}\label{ssec;performance_vs_parameters}
We proceed to evaluating our \ac{gsp}-\ac{ekf} under different graph parameters. Unless stated otherwise, we employ the {\em NL5} \ac{ssm} with the configuration used for the study reported in Fig.~\ref{fig:nonlinear_noise_sensitivity}, with the exception that here we set \( \sigma_{\evec} = 0.1, ~\sigma_{\vvec}=\sqrt{2} \), and evaluate performance vs different graph parameters in Fig.~\ref{fig:nonlinear_summary_all}. 

Figures~\ref{fig_mse_vs_sparsity_nonlin2} and \ref{fig_eier_vs_sparsity_nonlinear2} plot the \ac{mse} and \ac{eier} versus the sparsity level of the true graph used at initialization. 
In Fig.~\ref{fig_mse_vs_sparsity_nonlin2}, all methods show higher \ac{mse} as the graph becomes denser.  
This rise is mainly due to an error-aggregation effect, as having more active edges means more error terms are added, causing the total (or average) \ac{mse} to increase.
Change-det scheme shows greater sensitivity in this setting, as it is designed for sparse changes. With many active edges and large process noise, the edge-weight updates become larger than what the model is designed to handle, leading to increased error.

In Fig.~\ref{fig_eier_vs_sparsity_nonlinear2}, the \ac{eier} for the \ac{gsp}-\ac{ekf}, \ac{ekf}, and Change-det methods decreases as the graph becomes denser.  
 As the number of true edges increases,  accurate edge estimation becomes more likely, leading to a lower \ac{eier}.
Notably, the \ac{gsp}-\ac{ekf} achieves lower \ac{eier} due to its awareness of sparsity, which reduces the number of falsely estimated connections. 
Oracle-\ac{gsp}-\ac{ekf}  maintains near-zero \ac{eier}, with a slight increase at denser graphs. This increase is linked to the rise in \ac{mse} observed in Fig.~\ref{fig_mse_vs_sparsity_nonlin2}, which raises the likelihood of negative weight estimates that are subsequently set to zero after the update step.



We continue by examining the sensitivity of each method to the percentage of connection changes between consecutive support updates, as shown in Figs. \ref{fig_mse_vs_deltaN_nonlin2} and \ref{fig_eier_vs_deltaN_nonlinear2}.
The number of time samples between consecutive support updates is fixed at $3N$, and the tuning parameter $\mu$ for the \ac{gsp}-\ac{ekf} is adjusted according to the percentage of changed connections: 
$\mu = 0.25$ for less than 7.5\%, $\mu = 0.15$ for 7.5\%–15\%, and $\mu = 0.1$ for more than 15\%.

In Fig.~\ref{fig_mse_vs_deltaN_nonlin2}, we observe that for the \ac{gsp}-\ac{ekf}, \ac{ekf}, and Change-det methods, the \ac{mse} increases with the number of changed connections. This is because larger changes introduce greater initial adaptation errors and require estimating more parameters within the same number of time samples. The Change-det method exhibits the highest \ac{mse} in this setting, as it is tailored for sparse edge changes and assumes a less nonlinear measurement model. 
Moreover, as the number of changed connections increases, the true state becomes denser. This trend helps explain, in light of the influence of the sparsity
level of the true graph analysis (Figs.~\ref{fig_mse_vs_sparsity_nonlin2} and \ref{fig_eier_vs_sparsity_nonlinear2}), both the increase in \ac{mse} and the decrease in \ac{eier} observed in Fig.~\ref{fig_eier_vs_deltaN_nonlinear2} for the Change-det and \ac{ekf} methods. 
The \ac{eier} of the \ac{gsp}-\ac{ekf} increases due to the reduction in the threshold value $\mu$, which is adjusted to account for changing sparsity. Nevertheless, it remains the second-best performer after the Oracle-\ac{gsp}-\ac{ekf}.

We continue by examining the sensitivity of each method to the number of time samples between consecutive support updates, as shown in Figs. \ref{fig_mse_vs_k_nonlin2} and \ref{fig_eier_vs_k_nonlinear2}.
The tuning parameter 
$\mu$ for the \ac{gsp}-\ac{ekf} is adjusted based on this interval: 
$\mu=0.1$ for fewer than 
$2^3$ samples and $\mu=
0.25$ otherwise.
As expected, both \ac{mse} (Fig. \ref{fig_mse_vs_k_nonlin2}) and \ac{eier} (Fig. \ref{fig_eier_vs_k_nonlinear2}) decrease as the number of time samples between consecutive support updates increases. This is due to the greater availability of measurements under a fixed support, which enables more accurate estimation. 
 For the \ac{gsp}-\ac{ekf}, \ac{ekf}, and Change-det methods
the significant drop in \ac{mse} and \ac{eier} occurs when the amount of time units between consecutive changes smaller than $4.5$, which is consistent with Theorem~\ref{observbility}, that states that observability is not guaranteed below the threshold of $4.5$. 
 For  Oracle-\ac{gsp}-\ac{ekf}, both metrics remain approximately constant, as it only needs to estimate the weights of the active edges, 
and even a small number of measurements 
is sufficient for accurate estimation.  


\begin{figure*}[t]
\centering
\begin{subfigure}[t]{0.3\textwidth}
    \includegraphics[width=\linewidth]{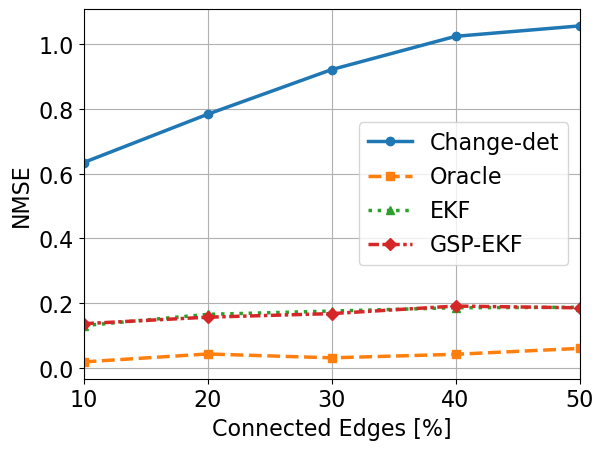}
    \caption{}
    \label{fig_mse_vs_sparsity_nonlin2}
\end{subfigure}
\begin{subfigure}[t]{0.3\textwidth}
    \includegraphics[width=\linewidth]{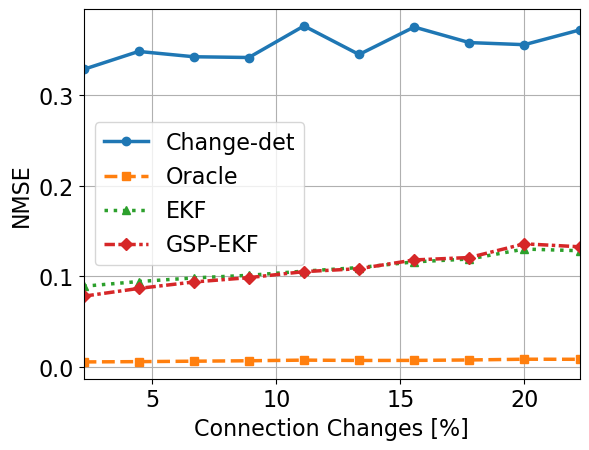}
    \caption{}
    \label{fig_mse_vs_deltaN_nonlin2}
\end{subfigure}
\begin{subfigure}[t]{0.3\textwidth}
    \includegraphics[width=\linewidth]{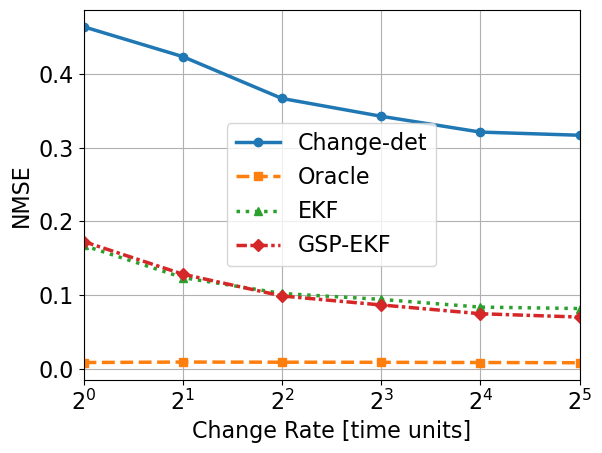}
    \caption{}
    \label{fig_mse_vs_k_nonlin2}
\end{subfigure}
\vspace{0.3cm}

\begin{subfigure}[t]{0.3\textwidth}
    \includegraphics[width=\linewidth]{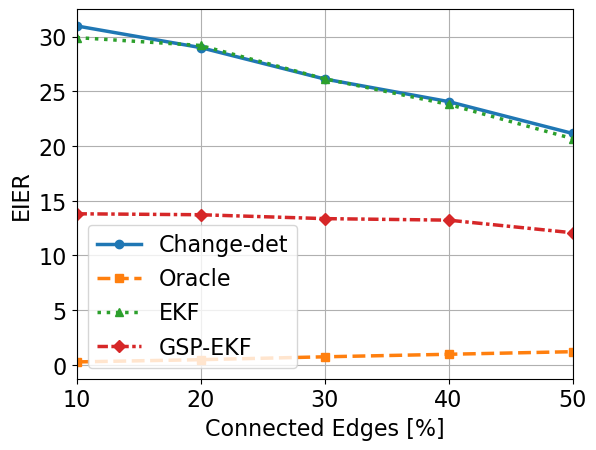}
    \caption{}
    \label{fig_eier_vs_sparsity_nonlinear2}
\end{subfigure}
\begin{subfigure}[t]{0.3\textwidth}
    \includegraphics[width=\linewidth]{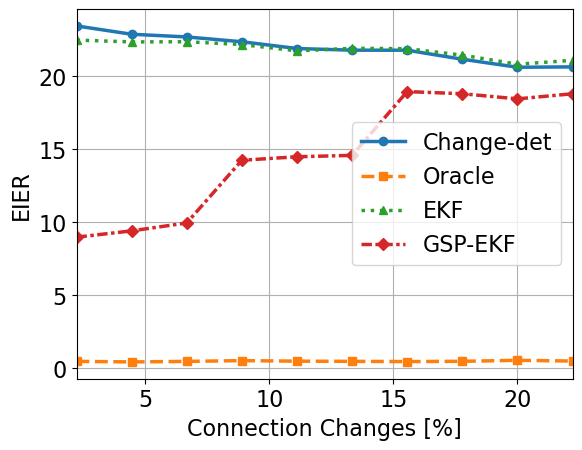}
    \caption{}
    \label{fig_eier_vs_deltaN_nonlinear2}
\end{subfigure}
\begin{subfigure}[t]{0.3\textwidth}
    \includegraphics[width=\linewidth]{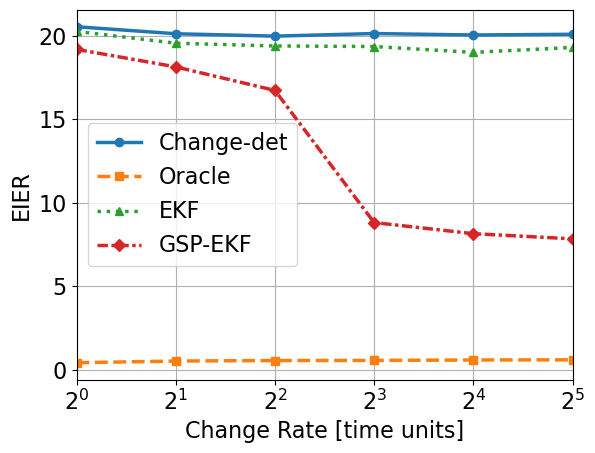}
    \caption{}
    \label{fig_eier_vs_k_nonlinear2}
\end{subfigure}
\caption{\ac{mse} (top) and \ac{eier} (bottom) versus: (a), (d) the sparsity level of the true graph used at initialization; (b), (e) the number of edge changes between consecutive support updates; and (c), (f) the number of time samples between consecutive edge set updates, in the {\em NL5} \ac{ssm}.
}
\label{fig:nonlinear_summary_all}
\end{figure*}

\subsubsection{Runtime Comparison}\label{ssec:sim_time}
We proceed to investigate how the computational cost and estimation performance of the \ac{gsp}-\ac{ekf} are affected by the complexity of the filter order, $P$, in the nonlinear measurement model in the {\em NLP} \ac{ssm}. 
We set the simulation parameters similar to the configuration used for the study reported in Fig.~\ref{fig:nonlinear_summary_all}.
The \ac{gsp}-\ac{ekf} parameter  is set to 
$\mu=0.25$ for $P<7$ and $\mu=0.15$ for $P>7$.


Fig.~\ref{fig_times_vs_polyOrder} shows that the average computation time per incoming measurement increases for all methods as \( P \) increases, since the complexity of evaluating both \( \hvec_{\timeIndx}(\xvec_{\timeIndx}) \) and its Jacobian grows polynomially with \( P \).  
The \ac{gsp}-\ac{ekf} and  Oracle-\ac{gsp}-\ac{ekf} achieve the lowest computation times, as both leverage the efficient Jacobian computation described in Algorithm~\ref{alg:DP_Jacobian}, which scales as \( O(PN^3) \). 
Next is the \ac{ekf}, which does not employ this optimization and instead scales as \( O(P^3N^4) \). The difference in computational complexity \ac{wrt} \( P \) explains why, for \( P = 1 \), the runtime differences among the \ac{gsp}-\ac{ekf}, \ac{ekf}, and Oracle-\ac{gsp}-\ac{ekf} are relatively small, and why the run time gap between the \ac{ekf} and the other two methods becomes more pronounced as \( P \) increases.  

Finally, Change-det exhibits the highest runtime, as it involves solving an optimization problem using the CVX package in Python, which becomes significantly slower to converge in the nonlinear measurement setting.

Figures~\ref{fig_mse_vs_polyOrder} and~\ref{fig_eier_vs_polyOrder} show that both \ac{mse} and \ac{eier} increase with the filter order \( P \), as the estimation becomes more nonlinear and therefore more challenging. The relative performance trends among the methods remain consistent with earlier simulations. 
Change-det exhibits a steep increase in \ac{mse} due to model mismatch, which worsens as data nonlinearity increases.




\begin{figure*}[hbt]
\captionsetup[subfigure]{labelformat=empty}
     \centering
     \begin{subfigure}[b]{0.31\textwidth}
         \centering
         \includegraphics[width=\textwidth]{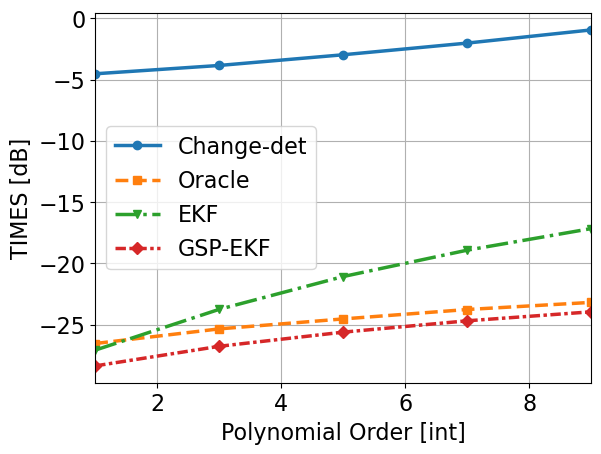}
         \caption{(a) Computation time versus the filter order $P$.}
         \label{fig_times_vs_polyOrder}
     \end{subfigure}
     \hfill
     \begin{subfigure}[b]{0.31\textwidth}
         \centering
         \includegraphics[width=\textwidth]{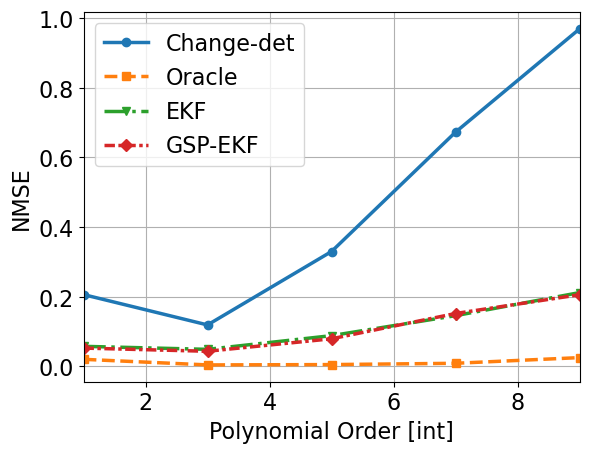}
         \caption{(b) N\ac{mse} versus the  filter order $P$.}
         \label{fig_mse_vs_polyOrder}
     \end{subfigure}
     \hfill
     \begin{subfigure}[b]{0.31\textwidth}
         \centering
         \includegraphics[width=\textwidth]{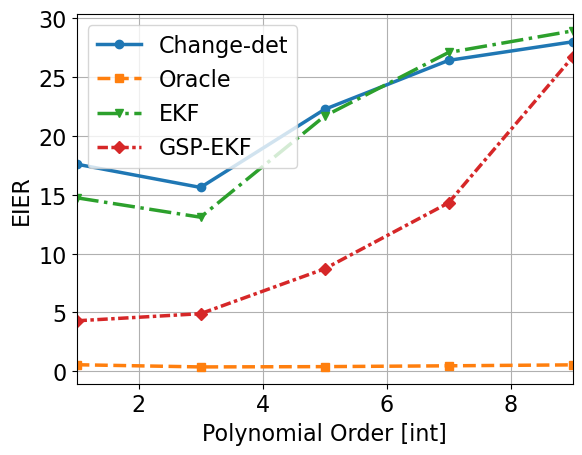}
         \caption{(c) \ac{eier}  versus the filter order $P$.}
         \label{fig_eier_vs_polyOrder}
     \end{subfigure}
        \caption{Performance versus the filter order $P$, {\em NLP} \ac{ssm}.}
        \label{fig:poly_order}
\end{figure*}
\section{Conclusions}\label{sec:conclusions}

In this paper we propose a novel framework for tracking dynamic graph topologies by formulating a sparse nonlinear \ac{ssm} under the \ac{gsp} paradigm. The evolving topology was modeled through time-varying edge weights and unknown graph support, where the Laplacian matrix was parameterized via the incidence matrix. The resulting measurement model, derived from a graph filter, is inherently nonlinear and topology-dependent, necessitating the use of an \ac{ekf} for state estimation.
To handle the sparsity of networks, we incorporated $\ell_1$-regularization into the EKF update step, yielding a sparsity-aware GSP-EKF algorithm. We derived a closed-form expression for the Jacobian of the graph filter and introduced an efficient dynamic-programming-based computation method that reduces the complexity of its computation. The properties of the proposed \ac{gsp}-\ac{ekf} method were investigated, and some special cases are presented. 
Through numerical experiments on both linear and nonlinear measurement models, the proposed \ac{gsp}-\ac{ekf} method demonstrated superior performance in terms of edge identification error and MSE. 
It consistently outperformed standard \ac{ekf} and Change-det methods, and closely approached the performance of  Oracle-\ac{gsp}-\ac{ekf} estimator with known support.
Overall, this work lays the foundation for robust and efficient topology tracking in dynamic networks within the graph signal processing framework.

\appendices
\renewcommand{\thesectiondis}[2]{\Alph{section}:}
\section{Proof of Theorem \ref{observbility}}\label{app:proof_observability}
In this appendix, we analyze the noiseless 
 linear time-varying system in \eqref{eq_model2}
over the interval \( k = \timeIndx, \timeIndx+1, \dots, \timeIndx+T-1 \), and identify conditions under which the initial state \( \xvec_{\timeIndx} \in \mathbb{R}^{\maxEdges} \) can be uniquely recovered from the observations $\{\yvec_{k}\}_{k=\timeIndx}^{\timeIndx+T-1}$ and known inputs $\{\cvec_{k}\}_{k=\timeIndx}^{\timeIndx+T-1}$. 
 As in standard observability analysis (e.g., \cite[Ch. 24.3]{dahleh2004lectures},  we assume \( \cvec_k = \zerovec \) without loss of generality, since any nonzero input can be absorbed into the observations by defining a modified vector \( \yvec_k - \cvec_k \). 
 
By  recursively substituting the state evolution equations into the measurement model, 
 each observation in \eqref{eq_model2}   can be expressed as a function of the initial state. Stacking these equations yields the following linear system:
\begin{eqnarray}\label{observability_proof}
    \left[
    \yvec_{\timeIndx}^T
    \ldots,
    \yvec_{\timeIndx+T-1}^T \right]^T
=\mathbf{O}_{\timeIndx}^{(T)}\xvec_{\timeIndx},
\end{eqnarray}
where the last equality is obtained 
 by substituting the definition of the $T$-step observability matrix $\mathbf{O}_{\timeIndx}^{(T)}$ from \eqref{matrix_O}.

The question of $T$-step observability at time $\timeIndx$ reduces to determining whether 
$\xvec_\timeIndx$ can be uniquely identified from the measurements 
$\{\yvec_{k}\}_{k=\timeIndx}^{\timeIndx+T-1}$ \cite[Ch. 24.3]{dahleh2004lectures}. From \eqref{observability_proof}, this holds if and only if the observability matrix  $\mathbf{O}_{\timeIndx}^{(T)}$ has full column rank, that is, $\text{rank}(\mathbf{O}_{\timeIndx}^{(T)}) = \maxEdges$.
\section{Proof of Lemma \ref{lemma_jacobian}}\label{app;jacobian_calc}
To prove Lemma~\ref{lemma_jacobian}, we compute the derivative of the graph filter output in \eqref{eqn:GraphFiltRep} \ac{wrt} the $m$th element of the edge-weight vector, $[\xvec_{\timeIndx}]_m$. First, by applying the partial derivative and using the linearity of differentiation, we obtain the $m$th column: 
\beqna\label{jacobain_derivation_step1}  
[\nabla_{\xvec_{\timeIndx}}\hvec_{\timeIndx}(\xvec_{\timeIndx})]_{:,m}=\frac{\partial  \hvec_{\timeIndx}(\xvec_{\timeIndx})}{\partial [\xvec_{\timeIndx}]_m} = 
\sum_{p=1}^P a_p \frac{\partial \Lmat_{\timeIndx}^p\qvec_{\timeIndx}}{\partial [\xvec_{\timeIndx}]_m}.
\eeqna   
Applying the product rule for matrix differentiation  
(see Equation (37) in \cite{matrix_cookbook}) and noting that $\qvec_{\timeIndx}$ is independent of $\xvec_{\timeIndx}$, we can write
\beqna\label{jacobain_derivation_step2}  
\frac{\partial \Lmat_{\timeIndx}^p\qvec_{\timeIndx}}{\partial [\xvec_{\timeIndx}]_m}
=\sum_{k=0}^{p-1} \Lmat_{\timeIndx}^k 
\frac{\partial \Lmat_{\timeIndx}}{\partial [\xvec_{\timeIndx}]_m} 
\Lmat_{\timeIndx}^{p-k-1}\qvec_{\timeIndx}.
\eeqna  
Substituting the Laplacian expression from \eqref{L2B_time_index} into \eqref{jacobain_derivation_step2}, and using the facts that $\Bmat$ is independent of $\xvec_{\timeIndx}$ and the identity
$\frac{\partial \xvec_{\timeIndx}}{\partial [\xvec_{\timeIndx}]_m}=\uvec_m$, where $\uvec_m$ is the $m$th column of the $\maxEdges\times\maxEdges$ identity matrix, whose entries are given by $[\uvec_m]_i = 1$ if $i=m$ and $[\uvec_m]_i = 0$ if $i\neq m$.
we obtain  
\beqna\label{jacobain_derivation_step3}  
\frac{\partial \Lmat_{\timeIndx}^p\qvec_{\timeIndx}}{\partial [\xvec_{\timeIndx}]_m}
=\sum_{k=0}^{p-1} \Lmat_{\timeIndx}^k \Bmat\diag(\uvec_m) \Bmat^T 
\Lmat_{\timeIndx}^{p-k-1}\qvec_{\timeIndx}.
\eeqna  
Finally, substituting \eqref{jacobain_derivation_step3} into \eqref{jacobain_derivation_step1}, 
and writing $\Bmat \diag(\uvec_m) \Bmat^T$ as $\Bmat_{:,m} \Bmat_{:,m}^T$ results in 
 \eqref{jacobain_poly}, which completes the proof.  
\section{Proof of Lemma~\ref{Lemma_jacobian_reindex}}\label{app;jacobian_reindex}
We begin with the expression for the $m$th column of the Jacobian matrix in \eqref{jacobain_poly}, and perform a change of variables $ p' = p - 1 $ and $ j = p- 1- k =p'-k $. Since $k$ ranges from $0$ to $p'=p-1$,
 it follows that $j$ also ranges from $0$ to $p'$.
Substituting this change of variables in \eqref{jacobain_poly} yields
\begin{equation}  
\label{jacobain_poly2}  
[\nabla_{\xvec_{\timeIndx}}\hvec_{\timeIndx}(\xvec_{\timeIndx})]_{:,m}=
\sum_{p'=0}^{P-1} a_{p'+1} \sum_{j=0}^{p'} \Lmat_{\timeIndx}^{p'-j} 
\Bmat_{:,m} \Bmat_{:,m}^T
\Lmat_{\timeIndx}^{j}\qvec_{\timeIndx}.
\end{equation}
Since $ j =p'-k$ and $ 0 \leq j \leq p' \leq P-1 $, we can change the order of summation to obtain
\beqna
\label{jacobain_poly_reordered}
[\nabla_{\xvec_{\timeIndx}}\hvec_{\timeIndx}(\xvec_{\timeIndx})]_{:,m}\hspace{6cm}\nonumber\\=
\sum_{m=0}^{P-1} \left(\sum_{p'=j}^{P-1} 
a_{p'+1} 
\Lmat_{\timeIndx}^{p'-j}\right) 
\Bmat_{:,m} \Bmat_{:,m}^T
\Lmat_{\timeIndx}^{j} 
\qvec_{\timeIndx}.
\eeqna
To further simplify the expression and make it suitable for dynamic programming, we define a new variable $ r = p' - j $, such that $ p' = j + r $ and $r$ 
ranges from $0$ to $P - 1 - j$. 
Substituting this into the inner summation in \eqref{jacobain_poly_reordered}, yields the results in \eqref{jacobain_poly_reindexed}.
\section{Proof of Claim \ref{claim;starting_point}}\label{app;proof_initialization}
By substituting $\hat{\xvec}_{\timeIndx}$ from \eqref{eqn:EKFUpdate}-\eqref{eqn:kalman_gain_computaion}  into \eqref{gradient_phi1}, we obtain
 \beqna\label{gradient_phi_substitute}
\nabla_\xvec\phi_l(\hat{\xvec}_{\timeIndx})= 
\Bigg(\left(\hat{\Hmat}_{\timeIndx}^T\Rmat^{-1}\hat{\Hmat}_{\timeIndx}+\hat{\bf{\Sigma}}_{\timeIndx|\timeIndx-1}^{-1}\right)\hat{\bf{\Sigma}}_{\timeIndx|\timeIndx-1}\hat\Hmat_{\timeIndx}^T \hat{\Smat}_{\timeIndx|\timeIndx-1}^{-1}\nonumber\\-\hat{\Hmat}_{\timeIndx}^T\Rmat^{-1}\Bigg)(\yvec_{\timeIndx}-\hvec_{\timeIndx}(\hat{\xvec}_{\timeIndx|\timeIndx-1})).
\eeqna
By using the variant of the Woodbury identity for positive definite matrices $\Pmat,\Rmat$ (see Equation (80) in \cite{Welling2010})
\begin{equation}
\label{eq:158}
\bigl(\Pmat^{-1} + \Bmat^{T} \Rmat^{-1} \Bmat \bigr)^{-1} \Bmat^{T} \Rmat^{-1}
\;=\;
\Pmat \, \Bmat^{T}
\bigl(\Bmat\, \Pmat \, \Bmat^{T} + \Rmat \bigr)^{-1},
\end{equation}
 along with the definition of ${\hat\Smat}_{\timeIndx|\timeIndx-1}$ from \eqref{eqn:obs_covariance_computaion} one obtains 
 \beqna \label{identitiy_claim2}
\left(\hat{\Hmat}_{\timeIndx}^T\Rmat^{-1}\hat{\Hmat}_{\timeIndx}+\hat{\bf{\Sigma}}_{\timeIndx|\timeIndx-1}^{-1}\right)\hat{\bf{\Sigma}}_{\timeIndx|\timeIndx-1}\cdot\hat\Hmat_{\timeIndx}^T\cdot \hat{\Smat}_{\timeIndx|\timeIndx-1}^{-1}=\hat{\Hmat}_{\timeIndx}^T\Rmat^{-1}.
 \eeqna 
Substituting \eqref{identitiy_claim2} into \eqref{gradient_phi_substitute} yields \( \nabla_{\xvec} \phi_{\timeIndx}(\hat{\xvec}_{\timeIndx}) = \zerovec \), and accordingly, \eqref{proxy_solution} can be expressed as \eqref{proxy_solution_1_iter}. Comparing this result (obtained for the linear \ac{ssm} in Example~\ref{sssec:LinearExample}) with the GT-KF2 algorithm from~\cite{dabush2024Routtenberg_Kalman} reveals that the two methods are equivalent in this special case.
\bibliographystyle{IEEEtran}
\bibliography{IEEEabrv,refs}

\begin{thebibliography}{10}
\providecommand{\url}[1]{#1}
\csname url@samestyle\endcsname
\providecommand{\newblock}{\relax}
\providecommand{\bibinfo}[2]{#2}
\providecommand{\BIBentrySTDinterwordspacing}{\spaceskip=0pt\relax}
\providecommand{\BIBentryALTinterwordstretchfactor}{4}
\providecommand{\BIBentryALTinterwordspacing}{\spaceskip=\fontdimen2\font plus
\BIBentryALTinterwordstretchfactor\fontdimen3\font minus \fontdimen4\font\relax}
\providecommand{\BIBforeignlanguage}[2]{{%
\expandafter\ifx\csname l@#1\endcsname\relax
\typeout{** WARNING: IEEEtran.bst: No hyphenation pattern has been}%
\typeout{** loaded for the language `#1'. Using the pattern for}%
\typeout{** the default language instead.}%
\else
\language=\csname l@#1\endcsname
\fi
#2}}
\providecommand{\BIBdecl}{\relax}
\BIBdecl

\bibitem{dabush2024Routtenberg_Kalman}
L.~Dabush and T.~Routtenberg, ``Kalman filter for tracking network dynamic,'' in \emph{Proc. IEEE ICASSP}, 2024, pp. 13\,216--13\,220.

\bibitem{newman2018networks}
M.~Newman, \emph{Networks: An Introduction}.\hskip 1em plus 0.5em minus 0.4em\relax New York, NY, USA: Oxford University Press, Inc., 2018.

\bibitem{Shuman_Ortega_2013}
D.~I. Shuman, S.~K. Narang, P.~Frossard, A.~Ortega, and P.~Vandergheynst, ``The emerging field of signal processing on graphs: Extending high-dimensional data analysis to networks and other irregular domains,'' \emph{{IEEE} Signal Process. Mag.}, vol.~30, no.~3, pp. 83--98, 2013.

\bibitem{Sandryhaila2014}
A.~{Sandryhaila} and J.~M.~F. {Moura}, ``Discrete signal processing on graphs: Frequency analysis,'' \emph{{IEEE} Trans. Signal Process.}, vol.~62, no.~12, pp. 3042--3054, June 2014.

\bibitem{8347162}
A.~Ortega, P.~Frossard, J.~Kovačević, J.~M.~F. Moura, and P.~Vandergheynst, ``Graph signal processing: Overview, challenges, and applications,'' \emph{Proc. {IEEE}}, vol. 106, no.~5, pp. 808--828, May 2018.

\bibitem{shaked2021identification}
S.~Shaked and T.~Routtenberg, ``Identification of edge disconnections in networks based on graph filter outputs,'' \emph{{IEEE} Trans. Signal Inf. Process. Netw.}, 2021.

\bibitem{gas_dis_system}
G.~Matalkah and E.~J. Coyle, ``From smart to intelligent utility meters in natural gas distribution networks,'' in \emph{IEEE World Forum on Internet of Things (WF-IoT)}, 2020.

\bibitem{chouzenoux2023sparse}
E.~Chouzenoux and V.~Elvira, ``Sparse graphical linear dynamical systems,'' \emph{Journal of Machine Learning Research}, vol.~25, no. 223, pp. 1--53, 2024.

\bibitem{gannot2008kalman}
S.~Gannot and A.~Yeredor, ``The {K}alman filter,'' \emph{Springer Handbook of Speech Processing}, pp. 135--160, 2008.

\bibitem{durbin2012time}
J.~Durbin and S.~J. Koopman, \emph{Time series analysis by state space methods}.\hskip 1em plus 0.5em minus 0.4em\relax OUP Oxford, 2012, vol.~38.

\bibitem{Ling_2009}
L.~Shi, ``{K}alman filtering over graphs: Theory and applications,'' \emph{{IEEE} Trans. Autom. Control}, vol.~54, no.~9, pp. 2230--2234, 2009.

\bibitem{Soule2005}
A.~Soule, K.~Salamatian, A.~Nucci, and N.~Taft, ``Traffic matrix tracking using {K}alman filters,'' \emph{Performance Evaluation Review}, vol.~33, pp. 24--31, 12 2005.

\bibitem{isufi20162}
E.~Isufi, G.~Leus, and P.~Banelli, ``2-dimensional finite impulse response graph-temporal filters,'' in \emph{Proc. IEEE GlobalSIP}, 2016, pp. 405--409.

\bibitem{Isufi2020}
E.~Isufi, P.~Banelli, P.~D. Lorenzo, and G.~Leus, ``Observing and tracking bandlimited graph processes from sampled measurements,'' \emph{Signal Processing}, vol. 177, 2020.

\bibitem{Sagi2023Routtenberg}
G.~Sagi, N.~Shlezinger, and T.~Routtenberg, ``Extended {K}alman filter for graph signals in nonlinear dynamic systems,'' in \emph{Proc. IEEE ICASSP}, 2023.

\bibitem{buchnik2023gspkalmannet}
I.~Buchnik, G.~Sagi, N.~Leinwand, Y.~Loya, N.~Shlezinger, and T.~Routtenberg, ``{GSP}-{K}alman{N}et: Tracking graph signals via neural-aided {K}alman filtering,'' \emph{{IEEE} Trans. Signal Process.}, vol.~72, pp. 3700--3716, 2024.

\bibitem{Cox2023Elvira}
B.~Cox and V.~Elvira, ``Sparse {B}ayesian estimation of parameters in linear-{G}aussian state-space models,'' \emph{{IEEE} Trans. Signal Process.}, vol.~71, pp. 1922--1937, 2023.

\bibitem{elvira2022graphical}
V.~Elvira and {\'E}.~Chouzenoux, ``Graphical inference in linear-{G}aussian state-space models,'' \emph{{IEEE} Trans. Signal Process.}, vol.~70, pp. 4757--4771, 2022.

\bibitem{Cox2025Polynomial}
B.~Cox, {\'E}.~Chouzenoux, and V.~Elvira, ``Graph{G}rad: Efficient estimation of sparse polynomial representations for general state-space models,'' \emph{{IEEE} Trans. Signal Process.}, vol.~73, pp. 1562--1576, 2025.

\bibitem{lake2010discovering}
B.~Lake and J.~Tenenbaum, ``Discovering structure by learning sparse graphs,'' in \emph{Proceedings of the Annual Meeting of the Cognitive Science Society}, vol.~32, no.~32, 2010.

\bibitem{medvedovsky2023efficient}
Y.~Medvedovsky, E.~Treister, and T.~S~Routtenberg, ``Efficient graph {L}aplacian estimation by proximal {N}ewton,'' in \emph{International Conference on Artificial Intelligence and Statistics}, 2024, pp. 1171--1179.

\bibitem{7979524}
H.~E. Egilmez, E.~Pavez, and A.~Ortega, ``Graph learning from data under {L}aplacian and structural constraints,'' \emph{{IEEE} J. Sel. Topics Signal Process.}, vol.~11, no.~6, pp. 825--841, Sept 2017.

\bibitem{NEURIPS2019_90cc440b}
S.~Kumar, J.~Ying, J.~V. de~Miranda~Cardoso, and D.~Palomar, ``Structured graph learning via laplacian spectral constraints,'' in \emph{Advances in Neural Information Processing Systems}, 2019.

\bibitem{Dong_Vandergheynst_2016}
X.~Dong, D.~Thanou, P.~Frossard, and P.~Vandergheynst, ``Learning {L}aplacian matrix in smooth graph signal representations,'' \emph{{IEEE} Trans. Signal Process.}, vol.~64, no.~23, pp. 6160--6173, Dec. 2016.

\bibitem{Vassilis_2016}
V.~Kalofolias, ``How to learn a graph from smooth signals,'' in \emph{Artificial intelligence and statistics}.\hskip 1em plus 0.5em minus 0.4em\relax PMLR, 2016, pp. 920--929.

\bibitem{hallac2017network}
D.~Hallac, Y.~Park, S.~Boyd, and J.~Leskovec, ``Network inference via the time-varying graphical lasso,'' in \emph{ACM SIGKDD international conference on knowledge discovery and data mining}, 2017, pp. 205--213.

\bibitem{Ioannidis2019Giannakis}
V.~N. Ioannidis, Y.~Shen, and G.~B. Giannakis, ``Semi-blind inference of topologies and dynamical processes over dynamic graphs,'' \emph{{IEEE} Trans. Signal Process.}, vol.~67, no.~9, pp. 2263--2274, 2019.

\bibitem{Shen2018Giannakis}
Y.~Shen and G.~B. Giannakis, ``Online identification of directional graph topologies capturing dynamic and nonlinear dependencies,'' in \emph{IEEE Data Science Workshop (DSW)}, 2018, pp. 195--199.

\bibitem{Money22022}
R.~Money, J.~Krishnan, and B.~Beferull-Lozano, ``Online joint nonlinear topology identification and missing data imputation over dynamic graphs,'' in \emph{European Signal Processing Conference (EUSIPCO)}, 2022, pp. 687--691.

\bibitem{cappelletti2024graphdictionarysignalmodelsparse}
W.~Cappelletti and P.~Frossard, ``Graph-dictionary signal model for sparse representations of multivariate data,'' \emph{arXiv preprint arXiv:2411.05729}, 2024.

\bibitem{alippi2023graph}
C.~Alippi and D.~Zambon, ``Graph {K}alman filters,'' \emph{arXiv preprint arXiv:2303.12021}, 2023.

\bibitem{Kalofolias2017}
V.~Kalofolias, A.~Loukas, D.~Thanou, and P.~Frossard, ``Learning time varying graphs,'' in \emph{Proc. IEEE ICASSP}, 2017, pp. 2826--2830.

\bibitem{10333427}
Y.~Hu and Z.~Xiao, ``Online directed graph estimation for dynamic network topology inference,'' in \emph{2023 IEEE 98th Vehicular Technology Conference (VTC2023-Fall)}, 2023, pp. 1--5.

\bibitem{Do_Coutto_Filho2007Schilling}
M.~B. {Do Coutto Filho}, J.~C.~S. {de Souza}, and M.~T. {Schilling}, ``Generating high quality pseudo-measurements to keep state estimation capabilities,'' in \emph{IEEE Lausanne Power Tech}, 2007, pp. 1829--1834.

\bibitem{julier2004unscented}
S.~J. Julier and J.~K. Uhlmann, ``Unscented filtering and nonlinear estimation,'' \emph{Proc. {IEEE}}, vol.~92, no.~3, pp. 401--422, 2004.

\bibitem{arasaratnam2009cubature}
I.~Arasaratnam and S.~Haykin, ``Cubature {K}alman filters,'' \emph{{IEEE} Trans. Autom. Control}, vol.~54, no.~6, pp. 1254--1269, 2009.

\bibitem{Ortega_2022}
A.~Ortega, \emph{Introduction to Graph Signal Processing}.\hskip 1em plus 0.5em minus 0.4em\relax Cambridge University Press, 2022.

\bibitem{Rama2020Anna}
R.~{Ramakrishna}, H.~T. {Wai}, and A.~{Scaglione}, ``A user guide to low-pass graph signal processing and its applications: Tools and applications,'' \emph{{IEEE} Signal Process. Mag.}, vol.~37, no.~6, pp. 74--85, 2020.

\bibitem{dabush2023state}
L.~Dabush, A.~Kroizer, and T.~Routtenberg, ``State estimation in partially observable power systems via graph signal processing tools,'' \emph{Sensors}, vol.~23, no.~3, p. 1387, 2023.

\bibitem{gruber1967approach}
M.~Gruber, ``An approach to target tracking,'' MIT Lexington Lincoln Lab, Tech. Rep., 1967.

\bibitem{Welling2010}
M.~Welling, ``The {K}alman filter,'' Lecture Notes, pp. 92--117, 2010.

\bibitem{dong2020graph}
X.~Dong, D.~Thanou, L.~Toni, M.~Bronstein, and P.~Frossard, ``Graph signal processing for machine learning: A review and new perspectives,'' \emph{{IEEE} Signal Process. Mag.}, vol.~37, no.~6, pp. 117--127, 2020.

\bibitem{GlobalSIP_Drayer_Routtenberg}
E.~Drayer and T.~Routtenberg, ``Detection of false data injection attacks in power systems with graph {F}ourier transform,'' in \emph{Proc. of GlobalSIP}, Nov. 2018, pp. 890--894.

\bibitem{water_gsp}
X.~Zhou, S.~Liu, W.~Xu, K.~Xin, Y.~Wu, and F.~Meng, ``Bridging hydraulics and graph signal processing: A new perspective to estimate water distribution network pressures,'' \emph{Water Research}, vol. 217, p. 118416, 2022.

\bibitem{traffic_diffusion}
P.~Medina, S.~C. Carrasco, M.~S. Jofr{\'e}, J.~Rogan, and J.~A. Valdivia, ``Characterizing diffusion processes in city traffic,'' \emph{Chaos, Solitons \& Fractals}, vol. 165, p. 112846, 2022.

\bibitem{dahleh2004lectures}
M.~Dahleh, M.~A. Dahleh, and G.~Verghese, ``Lectures on dynamic systems and control.''

\bibitem{elad2010sparse}
M.~Elad, \emph{Sparse and redundant representations: from theory to applications in signal and image processing}.\hskip 1em plus 0.5em minus 0.4em\relax Springer, 2010, vol.~2, no.~1.

\bibitem{Polyak2013}
B.~Polyak, M.~Khlebnikov, and P.~Shcherbakov, ``An {LMI} approach to structured sparse feedback design in linear control systems,'' in \emph{European Control Conference (ECC)}, 2013, pp. 833--838.

\bibitem{hastie2015statistical}
T.~Hastie, R.~Tibshirani, and M.~Wainwright, ``Statistical learning with sparsity,'' \emph{Monographs on statistics and applied probability}, vol. 143, no. 143, p.~8, 2015.

\bibitem{parikh2014proximal}
N.~Parikh and S.~Boyd, ``Proximal algorithms,'' \emph{Foundations and trends{\textregistered} in Optimization}, vol.~1, no.~3, pp. 127--239, 2014.

\bibitem{ista2003}
M.~Figueiredo and R.~Nowak, ``An {EM} algorithm for wavelet-based image restoration,'' \emph{{IEEE} Trans. Image Process.}, vol.~12, no.~8, pp. 906--916, 2003.

\bibitem{pyramid_formula}
W.~S. Anglin, ``The square pyramid puzzle,'' \emph{The American Mathematical Monthly}, vol.~97, no.~2, pp. 120--124, 1990.

\bibitem{Pan66}
V.~Y. Pan, ``Methods of computing values of polynomials,'' \emph{Russian Mathematical Surveys}, vol.~21, no.~1, pp. 105--136, 1966.

\bibitem{chang2022diagonal}
P.~G. Chang, K.~P. Murphy, and M.~Jones, ``On diagonal approximations to the extended {K}alman filter for online training of {B}ayesian neural networks,'' in \emph{Continual Lifelong Learning Workshop at ACML 2022}, 2022.

\bibitem{arulampalam2002tutorial}
M.~S. Arulampalam, S.~Maskell, N.~Gordon, and T.~Clapp, ``A tutorial on particle filters for online nonlinear/non-{G}aussian {B}ayesian tracking,'' \emph{{IEEE} Trans. Signal Process.}, vol.~50, no.~2, pp. 174--188, 2002.

\bibitem{smidl2008variational}
V.~Smidl and A.~Quinn, ``Variational {B}ayesian filtering,'' \emph{{IEEE} Trans. Signal Process.}, vol.~56, no.~10, pp. 5020--5030, 2008.

\bibitem{nuri2024learning}
I.~Nuri and N.~Shlezinger, ``Learning flock: Enhancing sets of particles for multi sub-state particle filtering with neural augmentation,'' \emph{{IEEE} Trans. Signal Process.}, vol.~73, pp. 99--112, 2025.

\bibitem{revach2022kalmannet}
G.~Revach, N.~Shlezinger, X.~Ni, A.~L. Escoriza, R.~J. Van~Sloun, and Y.~C. Eldar, ``{K}alman{N}et: Neural network aided {K}alman filtering for partially known dynamics,'' \emph{{IEEE} Trans. Signal Process.}, vol.~70, pp. 1532--1547, 2022.

\bibitem{shlezinger2024ai}
N.~Shlezinger, G.~Revach, A.~Ghosh, S.~Chatterjee, S.~Tang, T.~Imbiriba, J.~Dunik, O.~Straka, P.~Closas, and Y.~C. Eldar, ``{AI}-aided {K}alman filters,'' \emph{{IEEE} Signal Process. Mag.}, 2025, early access.

\bibitem{cattivelli2010diffusion}
F.~S. Cattivelli and A.~H. Sayed, ``Diffusion strategies for distributed {K}alman filtering and smoothing,'' \emph{{IEEE} Trans. Autom. Control}, vol.~55, no.~9, pp. 2069--2084, 2010.

\bibitem{matrix_cookbook}
K.~B. Petersen and M.~S. Pedersen, ``The matrix cookbook,'' \emph{Techn. Univ. Denmark}, vol.~7, no.~15, p. 510, 2012.

\end{thebibliography}

\end{document}